\begin{document}
\title{Discriminative Distance-Based Network Indices with Application to Link Prediction}

\author{Mostafa Haghir Chehreghani}
\affiliation{%
  \institution{LTCI, T\'el\'ecom ParisTech}
  \streetaddress{Universit\'e Paris-Saclay}
  \city{Paris} 
  \postcode{75013}
}
\email{mostafa.chehreghani@gmail.com}

\author{Albert Bifet}
\affiliation{%
  \institution{LTCI, T\'el\'ecom ParisTech}
  \streetaddress{Universit\'e Paris-Saclay}
  \city{Paris} 
  \postcode{75013}
}
\email{albert.bifet@telecom-paristech.fr}

\author{Talel Abdessalem}
\affiliation{%
  \institution{LTCI, T\'el\'ecom ParisTech}
  \streetaddress{Universit\'e Paris-Saclay}
  \city{Paris} 
  \postcode{75013}
}
\email{talel.abdessalem@telecom-paristech.fr}

\begin{abstract}
In distance-based network indices,
the distance between two vertices is measured by the length of shortest paths between them.
A shortcoming of this measure is
that
when it is used 
in real-world networks,
a huge number of vertices may have exactly the same closeness/eccentricity scores.
This restricts the applicability of these indices
as they cannot distinguish vertices.
Furthermore,
in many applications, the distance between two vertices not only
depends on the length of shortest paths,
but also on the number of shortest paths between them.
In this paper,
first we develop a new distance measure,
proportional to the length of shortest paths and
inversely proportional to the number of shortest paths,
that yields
discriminative distance-based centrality indices.
We present exact and randomized algorithms for computation of the proposed discriminative indices.
Then, by performing extensive experiments,
we first show that compared to the traditional indices,
discriminative indices have usually much more discriminability.
Then, we show that our randomized algorithms can very precisely estimate average discriminative path length and
average discriminative eccentricity,
using only few samples.
Then, we show that real-world networks have usually a tiny average discriminative path length,
bounded by a constant (e.g., $2$).
We refer to this property as the {\em tiny-world property}.
Finally,
we present a novel {\em link prediction} method,
that uses {\em discriminative distance} to decide which vertices are more likely to
form a link in future, and show its superior performance.
\end{abstract}

%
%
%
%
%

\begin{CCSXML}
<ccs2012>
<concept>
<concept_id>10003752.10003809.10003635.10010037</concept_id>
<concept_desc>Theory of computation~Shortest paths</concept_desc>
<concept_significance>500</concept_significance>
</concept>
</ccs2012>
\end{CCSXML} 

\ccsdesc[500]{Theory of computation~Shortest paths}

\keywords{Social network analysis, distance-based network indices, discriminative indices, closeness
centrality, eccentricity, average path length, the tiny-world property, link prediction}

\maketitle

\section{Introduction}
\label{sec:introduction}

An important category of network indices
is based on the {\em distance} (the length of the shortest paths)
between every two vertices in the network.
It includes {\em closeness centrality},
{\em average path length},
{\em vertex eccentricity},
{\em average graph eccentricity}, etc.
Indices in this category have many important applications in different areas.
For example, in disease transmission networks, 
closeness centrality is used to measure vulnerability to disease and infectivity \cite{Bell19991}.
In routing networks, vertex eccentricity is used to 
determine vertices that form the periphery
of the network and have the largest worst-case response time
to any other device \cite{Magoni:2001:AAS:505659.505663,a6010100}.
In biological networks, vertices with high eccentricity
perceive changes in concentration of their neighbor enzymes or molecules \cite{Pavlopoulos2011}.

Using the length of shortest paths as the distance measure
has shortcomings.
A well-studied shortcoming is that 
extending it to disconnected graphs (and also directed graphs)
is controversial \cite{Wasserman1994,Cornwell_complement_centrality,Rochat2009,Opsahl2010245}.
The other --less studied-- shortcoming is
that by using this measure,
a huge number of vertices may find exactly the same closeness/eccentricity score.
For instance, Shun \cite{Shun:2015:EPE:2783258.2783333} recently reported that around $30\%$
of the (connected) vertices of the Yahoo graph have the same non-zero eccentricity score.
Our experiments, reported in Section~\ref{sec:empiricaldiscriminability},
reveal that this happens in many real-world graphs.
This restricts the applicability of distance-based indices such as closeness and eccentricity,
as they cannot distinguish vertices.
For example, when closeness or eccentricity
are used for the facility location problem \cite{DBLP:conf/dagstuhl/KoschutzkiLPRTZ04},
they may not be able to distinguish one location among a set of candidate locations.
Finally, in many cases, the distance between two vertices not only
depends on the length of shortest paths,
but also on the number of shortest paths between them.
As a simple example, consider a network of locations
where edges are roads connecting the locations.
In a facility location problem, given two (or more) candidate locations,
we want to choose the one
which is more accessible from the rest of the network.
Then, we may prefer the location
which is slightly farther from the rest of the network but has more connections
to the location which is closest to the rest of the network.
In particular,
if two locations have exactly the same distance from the other locations,
the one connected to the rest of the network by more roads is preferred.

These observations motivate us to develop a new distance measure between vertices of a graph that yields
{\em more discriminative} centrality notions.
Furthermore, it considers both shortest path length and the number of shortest paths.
In this paper, our key contributions are as follows.
\begin{itemize}
\item 
We propose new distance-based network indices,
including {\em discriminative closeness},
{\em discriminative path length},
{\em discriminative vertex eccentricity} and  
{\em average discriminative graph eccentricity}.
These indices are proportional to the length of shortest paths and
inversely proportional to the number of shortest paths.
Our empirical evaluation of these notions reveals an interesting property of real-world networks.
While real-world graphs 
have the \textbf{{small-world}} property
which means they have a small average path length bounded by the logarithm of the number of their vertices,
they usually have a considerably smaller average discriminative path length,
bounded by a constant (e.g., $2$).
We refer to this property as the \textbf{{tiny-world}} phenomena.
\item
We present algorithms for exact computation of the proposed discriminative indices.
We then develop a randomized algorithm that precisely estimate average discriminative path length
(and average discriminative eccentricity)
and show that it can give an $(\epsilon,\delta)$-approximation,
where $\epsilon \in \mathbb R^+$ and $\delta \in (0,1)$.

\item
We perform extensive experiments over several real-world networks from different domains.
First, we examine {\em discriminability} of our proposed indices and show that compared to the traditional indices,
they are usually much more discriminative\footnote{Note that
having a {\em total ordering} of the vertices is not always desirable and by
{\em discriminative indices}, we do not aim to do so.
Instead, we want to have a {\em partial ordering} over a {\em huge} number of vertices that
using traditional distance-based measures,
find exactly the same value.
}.
Second, we evaluate the empirical efficiency of our simple randomized algorithm for
estimating average discriminative path length
and show that it can very precisely estimate average discriminative path length,
using only few samples.
Third, we show that our simple randomized algorithm for
estimating average discriminative eccentricity
can generate high quality results, using only few samples.
This has analogy to the case of average eccentricity
where a simple randomized algorithm significantly
outperforms more advanced techniques \cite{Shun:2015:EPE:2783258.2783333}.

\item
In order to better motivate the usefulness of our proposed distance measure in real-world applications,
we present a novel {\em link prediction} method,
that uses discriminative distance to indicate which vertices are more likely to
form a link in future.
By running extensive experiments over several real-world datasets,
we show the superior performance of our method,
compared to the well-known existing methods.
\end{itemize}

The rest of this paper is organized as follows. 
In Section~\ref{sec:preliminaries},
preliminaries and necessary definitions related to
distance-based indices are introduced.
A brief overview on related work is given in Section~\ref{sec:relatedwork}.
In Section~\ref{sec:discriminativecloseness},
we introduce our discriminative distance-based indices and discuss their extensions and properties.
We present exact and approximate algorithms for computing discriminative indices in Section~\ref{sec:algorithms}.
In Section~\ref{sec:experimentalresults}, we empirically evaluate
discriminability of our indices and the efficiency and accuracy of
our randomized algorithms.
In Section~\ref{sec:linkprediction}, 
we present our link prediction method
and show its superior performance.
Finally, the paper is concluded in Section~\ref{sec:conclusion}.

\section{Preliminaries}
\label{sec:preliminaries}

In this section, we present definitions and notations widely used in the paper.
We assume that the reader is familiar with basic concepts in graph theory.
Throughout the paper, $G$ refers to a graph (network).
For simplicity, we assume that $G$ is a connected, undirected and loop-free graph without multi-edges.
Throughout the paper, we assume that $G$ is an unweighted graph,
unless it is explicitly mentioned that $G$ is weighted.
$V(G)$ and $E(G)$ refer to the set of vertices and the set of edges of $G$, respectively.
We use $n$ and $m$ to refer to $|V(G)|$ and $|E(G)|$, respectively.
We denote the set of neighbors of a vertex $v$ by $\mathcal N(v)$.

A \textit{shortest path} (also called a \textit{geodesic path})
between two vertices $v,u \in V(G)$ is a path
whose length is minimum, among all paths between $v$ and $u$.
For two vertices $v,u \in V(G)$,
we use $d(v,u)$, to denote the {\em length} (the number of edges) of a shortest path connecting $v$ and $u$.
We denote by $\sigma(v,u)$ the number of shortest paths between $v$ and $u$.
By definition, $d(v,v)=0$ and $\sigma(v,v)=0$.
We use $deg(v)$ to denote the degree of vertex $v$.
The {\em diameter} of $G$, denoted by $\Delta(G)$, is defined as
$\max_{v,u \in V(G)} d(v,u)$.
The {\em radius} of $G$ is defined as 
$\min_{v \in V(G)} \max_{ u \in V(G)\setminus\{v\}} d(v,u)$.

{\em Closeness} centrality of a vertex $v \in V(G)$ is defined as \cite{DBLP:conf/sdm/KangPST11}:\footnote{
The more common definition of {\em closeness centrality} is as follows \cite{DBLP:conf/alenex/BergaminiBCMM16}:
$C(v) = \frac{n-1}{\sum_{u\in V(G)\setminus\{v\}} d(v,u)}$.
In this paper, due to consistency with the definitions of 
the other distance-based indices,
we use the definition 
presented in Equation~\ref{eq:closeness}.
Note that this change has no effect on the results presented in the paper and they are
still valid for the more common definition of closeness.
}
\begin{equation}
\label{eq:closeness}
C(v) = \frac{1}{n-1} \sum_{u\in V(G)\setminus\{v\}} d(v,u).
\end{equation}
{\em Average path length} of graph $G$ is defined as \cite{doi:10.1137/S003614450342480}:
\begin{equation}
\label{eq:averagepathlength}
APL(G) = \frac{1}{n \times (n-1) }
\sum_{v \in V(G)} \sum_{u\in V(G)\setminus\{v\}} d(v,u).
\end{equation}
{\em Eccentricity} of a vertex $v \in V(G)$ is defined as \cite{NET:NET2,husfeldt:LIPIcs:2017:6947}:\footnote{Again,
while the common definition of {\em eccentricity} does not have the normalization factor
$\frac{1}{n-1}$, here in order to have consistent definitions for all the distance-based indices,
we add it to Equation~\ref{eq:eccentricity}.
}
\begin{equation}
\label{eq:eccentricity}
E(v) = \frac{1}{n-1} \max_{ u\in V(G)\setminus\{v\}}  d(v,u).
\end{equation}
{\em Average eccentricity} of graph $G$ is defined as \cite{NET:NET2,husfeldt:LIPIcs:2017:6947}:
\begin{equation}
\label{eq:eccentricitygraph}
AE(G) = \frac{1}{n \times (n-1)}\sum_{v\in V(G)} \max_{u\in V(G)\setminus\{v\}} d(v,u).
\end{equation}
{\em Center} of a graph is defined as the set of vertices that have the minimum
eccentricity.
{\em Periphery} of a graph is defined as the set of vertices that have the maximum
eccentricity.

  
\section{Related work}
\label{sec:relatedwork}

The widely used distance-based indices are
{\em closeness} centrality, {\em average path length}, {\em eccentricity} and
{\em average eccentricity} defined in Section~\ref{sec:preliminaries}.
In all these indices, it is required to compute the distance between every pair of vertices. 
The best algorithm in theory for solving all-pairs shortest paths
is based on matrix multiplication \cite{DBLP:conf/stoc/Williams12}
and its time complexity is $O(n^{2.3727})$.
However, in practice breadth first search (for unweighted graphs)
and Dijkstra's algorithm (for weighted graphs with positive weights)
are more efficient.
Their time complexities for all vertices are $O(nm)$ and $O(nm+n^2 \log n)$, respectively.
In the following, we briefly review exact and inexact algorithms proposed
for computing closeness and eccentricity.

\subsection{Closeness centrality and average path length}
Eppstein and Wang \cite{DBLP:journals/jgaa/EppsteinW04} presented a uniform sampling algorithm 
that with high probability
approximates the inverse closeness centrality
of all vertices in a weighted graph $G$ within an additive error
$\epsilon \Delta(G)$.
Their algorithm requires $O(\frac{\log n}{\epsilon^2})$ samples and
spends $O(n\log n+m)$ time to process each one.
Brandes and Pich \cite{jrnl:Brandes3}
extended this sampler by considering different
non-uniform ways of sampling.
Cohen et.al. \cite{Cohen:2014:CCC:2660460.2660465} combined the sampling method with the
{\em pivoting} approach \cite{Cohen:2000:PNW:331605.331610} and \cite{Ullman:1991:HPP:103123.103129},
where pivoting is used for the vertices
that are far from the given vertex.
Olsen et.al. \cite{DBLP:conf/icde/OlsenLH14} suggested
storing and re-using the intermediate results
that are common among different vertices.
Okamoto et.al. \cite{DBLP:conf/faw/OkamotoCL08} 
presented an algorithm for 
ranking top $k$ highest closeness centrality vertices
that runs in $O((k + n^{\frac{2}{3}} \log^{\frac{1}{3}} n)(n\log n + m) )$ time.
There are several extensions of closeness centrality for specific networks.
Kang et.al. \cite{DBLP:conf/sdm/KangPST11}
defined closeness centrality of a vertex $v$ as the
(approximate) average distance from $v$ to all other vertices in the graph
and proposed algorithms to compute it in \textsc{MapReduce}.
Tarkowski et.al. \cite{DBLP:conf/aaai/TarkowskiSRMW16} developed a game-theoretic
extension of closeness centrality
to networks with community structure. 

\subsection{Eccentricity and average eccentricity}
 
Dankelmann et.al. \cite{Dankelmann:2004} showed that the average eccentricity of a graph is at least
$\frac{9n}{4(deg_{m}+1)}+O(1)$,
where $deg_{m}$ is the minimum degree of the graph.
Roditty and Williams \cite{Roditty:2013:FAA:2488608.2488673} developed an algorithm that gives an estimation
$\hat{E}(v)$ of the eccentricity of vertex $v$ in an undirected and unweighted graph,
such that $\hat{E}(v)$ is bounded as follows: $\frac{2}{3}E(v) \leq \hat{E}(v) \leq \frac{3}{2}E(v)$.
Time complexity of this algorithm is
$O(m \sqrt{n \log n})$.
Takes and Kosters \cite{a6010100} presented 
an exact eccentricity computation algorithm,
based on lower and upper bounds on the eccentricity of each vertex of the graph.
They also presented a pruning technique
and showed that it can significantly improve upon the standard algorithms.
Chechik et.al. \cite{Chechik:2014:BAA:2634074.2634152} introduced an
$O(\left(m \log m \right)^{3/2} )$ time 
algorithm that gives an estimate $\hat{E}(v)$ of
the eccentricity of vertex $v$ in an undirected and weighted graph,
such that $\frac{3}{5}E(v) \leq \hat{E}(v) \leq E(v)$.
Shun \cite{Shun:2015:EPE:2783258.2783333} compared shared-memory parallel implementations
of several average eccentricity approximation algorithms.
He showed that in practice a two-pass simple algorithm
significantly outperforms more advanced algorithms
such as \cite{Roditty:2013:FAA:2488608.2488673} and \cite{Chechik:2014:BAA:2634074.2634152}.

\section{Discriminative distance-based indices} 
\label{sec:discriminativecloseness}

In this section, we present the family of discriminative distance-based indices.
%
%
\subsection{Indices}
\label{sec:indices}
The first index is {\em discriminative closeness} centrality.
Similar to closeness centrality,
discriminative closeness is based on the length of shortest paths between different
vertices in the graph.
However, unlike closeness centrality,
discriminative closeness centrality considers the number of shortest paths, too.
For a vertex $v \in V(G)$, 
{\em discriminative closeness} of $v$, denoted with $DC(v)$, is formally defined as follows:

\begin{equation}
\label{eq:discriminativecloseness}
DC(v) = \frac{1}{n-1} \sum_{u\in V(G)\setminus\{v\}} \frac{d(v,u)}{\sigma(v,u)}.
\end{equation}

If in the definition of average path length,
closeness centrality is replaced by
discriminative closeness centrality defined in Equation~\ref{eq:discriminativecloseness},
we get {\em average discriminative path length} of $G$,
defined as follows:

\begin{equation}
\label{eq:averagediscriminativepathlength}
ADPL(G) = \frac{1}{n \times (n-1) }
\sum_{v \in V(G)} \sum_{u\in V(G)\setminus\{v\}} \frac{d(v,u)}{\sigma(v,u)}.
\end{equation}

In a similar way, {\em discriminative eccentricity} of a vertex $v \in V(G)$,
denoted by $DE(v)$,
is defined as follows:

\begin{equation}
\label{eq:discriminativeeccentricity}
DE(v) = \frac{1}{n-1} \max_{u\in V(G)\setminus\{v\}} \frac{d(v,u)}{\sigma(v,u)}.
\end{equation}

Finally, {\em average discriminative eccentricity} of $G$
is defined as follows:

\begin{equation}
\label{eq:discriminativeeccentricitygraph}
ADE(G) = \frac{1}{n \times (n-1)}\sum_{v\in V(G)} \max_{u\in V(G)\setminus\{v\}} \frac{d(v,u)}{\sigma(v,u)}.
\end{equation}

All these notions are based on replacing {\em distance} by
{\em discriminative distance},
defined as follows.
For $v,u\in V(G)$, {\em discriminative distance} between $v$ and $u$,
denoted with $dd(v,u)$,
is defined as $\frac{d(v,u)}{\sigma(v,u)}$.
We define {\em discriminative diameter} and {\em discriminative radius} of $G$
respectively as follows:

\begin{equation}
\label{eq:discriminativediaeter}
DD(G) = \max_{v\in V(G)} \max_{u\in V(G)\setminus\{v\}} \frac{d(v,u)}{\sigma(v,u)},
\end{equation}

\begin{equation}
\label{eq:discriminativeradius}
DR(G) = \min_{v\in V(G)} \max_{u\in V(G)\setminus\{v\}} \frac{d(v,u)}{\sigma(v,u)}.
\end{equation}

Finally, 
we define {\em discriminative center} of a graph as the set of vertices that have the minimum
discriminative eccentricity; and
{\em discriminative periphery} of a graph as the set of vertices that have the maximum
discriminative eccentricity.

\paragraph{\textbf{Generalizations}}
We can consider two types of generalizations of
Equations~\ref{eq:discriminativecloseness}-\ref{eq:discriminativeradius}.
In the first generalization, in the denominator of the equations,
instead of using the number of shortest paths, we may use
the number of a restricted class of shortest paths, e.g., vertex disjoint shortest paths,
edge disjoint shortest paths etc.
In the second generalization, instead of directly using distances and the number of shortest paths,
we may introduce and use functions $f$ and $g$,
defined respectively on the length and the number of shortest paths.
Then, by changing the definitions of $f$ and $g$,
we can switch among different distance-based notions.
For example, for any two vertices $v,u \in V(G)$,
if $f(d(v,u))$ and $g(\sigma(v,u))$ 
are respectively defined as $d(v,u)$ and $1$,
we will have the traditional distance-based indices introduced in Section~\ref{sec:preliminaries}.
If $f(d(v,u))$ and $g(\sigma(v,u))$ 
are respectively defined as $\frac{1}{d(v,u)}$ and $1$,
we will have {\em harmonic closeness} centrality \cite{Rochat2009}
defined as follows:
\[HC(v)=\frac{1}{n-1} \sum_{u\in V(G) \setminus \{v\}} \frac{1}{d(v,u)}.\]
Then, someone may define {\em discriminative harmonic closeness} centrality of vertex $v$ as:

\begin{equation}
\label{eq:discriminativeharmoniccloseness}
DHC(v) = \frac{1}{n-1} \sum_{u \in V(G) \setminus \{v\}} \frac{\sigma(v,u)}{d(v,u)},
\end{equation}
where $f(d(v,u))$ and $g(\sigma(v,u))$ 
are respectively defined as $\frac{1}{d(v,u)}$ and $\frac{1}{\sigma(v,u)}$.

\paragraph{\textbf{Connection to the other indices}}
Path-based indices such as {\em betweenness centrality} \cite{DBLP:journals/cj/Chehreghani14,bcd}
(and its generalizations such as {\em group betweenness} centrality \cite{doi:10.1080/10556788.2016.1167892}
and {\em co-betweenness} centrality \cite{conf:cbcwsdm})
consider the number of shortest paths that pass over a vertex.
However, betweenness centrality does not consider the shortest path length and it is used as an indicator of
the amount of control that a vertex has over shortest paths in the network.
Some variations of betweenness centrality, such as {\em length-scaled betweenness centrality}
and {\em linearly scaled betweenness centrality} \cite{Brandes08onvariants},
are more similar to our proposed notions.
However, they still measure the amount of control that a vertex has over shortest paths,
but give a weight (which is a function of distance)
to the contribution of each shortest path.
In our proposed notions, the number of shortest paths passing over a vertex
does not always contribute to the centrality of the vertex. 
Indices such as {\em Katz centrality} \cite{Katz1953} and
{\em personalized PageRank} \cite{Haveliwala:2002:TP:511446.511513}
consider both the length and the number of paths between two vertices.
However, there are important differences, too.
For example, Katz centrality is proportional to both the length and the number of paths.
Furthermore, it considers all paths. This makes it inappropriate for the
applications where the concept of shortest paths is essential.
This index is mainly used in the analysis of directed acyclic graphs.
If in the Katz index of two vertices $v$ and $u$, denoted by $K(v,u)$,
the paths are limited to shortest paths (and the bias constant $\beta$ is set to $0$),
we can express it using our generalization of discriminative indices.
If this limitation is applied, $K(v,u)$ will be defined as $\alpha^{d(v,u)}\sigma(v,u)$, where
$\alpha$ is the {\em attenuation factor} \cite{Katz1953}.
Then, this index can be seen as a special case of our generalized discriminative index,
where $f(d(v,u))$ is defined as $\alpha^{d(v,u)}$ and
$g(\sigma(v,u))$ is defined as $\frac{1}{\sigma(v,u)}$.

The other index that may have some connection to our discriminative indices is
{\em clustering coefficient} \cite{watts1998cds}.
Both clustering coefficient and discriminative indices are sensitive to the local density of the vertices, however,
they have different goals.
While clustering coefficient aims to directly reflect the local density,
discriminative indices aim to take into account the density of different regions of the graph,
when computing distances.

\paragraph{\textbf{Disconnected or directed graphs}}
When the graph is disconnected or directed,
it is possible that there is no (shortest) path between vertices $v$ and $u$.
In this case, $d(v,u)=\infty$ and $\sigma(v,u)=0$, hence, $\frac{d(v,u)}{\sigma(v,u)}$ is undefined.
For closeness centrality, when $d(v,u)=\infty$,
a first solution is to define $d(v,u)$ as $n$.
The rationale is that in this case $d(v,u)$ is a number
greater than any shortest path length.
We can use a similar technique for discriminative distance:
when there is no path from $v$ to $u$,
we define $d(v,u)$ as $n$
and $\sigma(v,u)$ as $1$.
This discriminative distance will be greater than the
discriminative distance between any two vertices $v'$ and $u'$
that are connected by a path from $v'$ to $u'$.
The second solution suggested for closeness centrality is to use harmonic centrality \cite{Rochat2009}.
As stated in Equation~\ref{eq:discriminativeharmoniccloseness},
this can be applied to discriminative closeness, too.
When $d(v,u)=\infty$, Equation~\ref{eq:discriminativeharmoniccloseness} yields $\frac{0}{\infty}$,
which is conventional to define as $0$.

\begin{figure}
\centering
\subfigure[]
{
\includegraphics[scale=0.5]{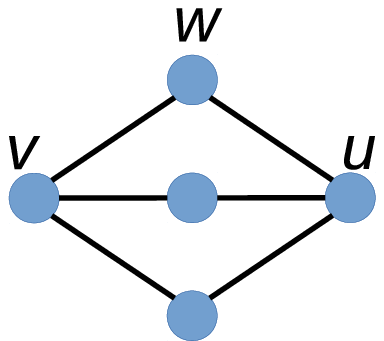}
\label{fig:example1}
}\qquad\qquad
\subfigure[]
{
\includegraphics[scale=0.5]{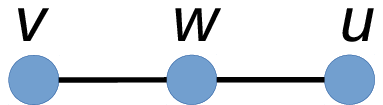}
\label{fig:example2}
}
\caption
{
\label{figure:example}
In \ref{fig:example1}, we have: $dd(v,u)<dd(v,w)+dd(w,u)$ and in \ref{fig:example2},
we have: $dd(v,u)=dd(v,w)+dd(w,u)$.}
\end{figure}

\paragraph{\textbf{A property}}
A nice property of shortest path length is that
for vertices $v,u,w \in V(G)$
such that $w$ is on a shortest path between $v$ and $u$,
the following holds: $d(v,u) = d(v,w)+d(w,u)$.
This property is useful in e.g., designing efficient distance computation algorithms. 
This property does not hold for discriminative distance as
$dd(v,u)$ can be less than or equal to $dd(v,w)+dd(w,u)$.
An example is presented in Figure~\ref{figure:example}.
However, we believe this is not a serious problem.
The reason is that more than shortest path length that satisfies the above mentioned property,
discriminative distance is based on the number of shortest paths,
which satisfies the following property: $\sigma_w(v,u)=\sigma(v,w) \times \sigma(w,u)$,
where $\sigma_w(v,u)$ is the number of shortest paths between $v$ and $u$ that pass over $w$.
As we will discuss in Section~\ref{sec:algorithms}, these two properties can help us
to design efficient algorithms for computing discriminative distance-based indices.

\begin{figure}
\centering
\includegraphics[scale=0.5]{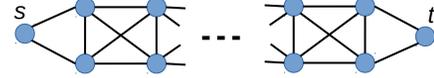}
\caption{\label{figure:exponential}
There are $2^{(n/2)-1}$ shortest paths between $s$ and $t$.}
\end{figure}

\paragraph{\textbf{Why our proposed indices are more discriminative}}
It is very difficult to theoretically prove that for general graphs,
our proposed indices are always more discriminative
than the existing distance-based indices.
However, we can still provide arguments explaining why in practice
our proposed indices are more discriminative.
We here focus on unweighted graphs.
Let $n$ be the number of vertices.
While the maximum possible shortest path length in a graph is $n-1$,
for real-world networks it is much smaller and it is $\log n$.
This means the range of all possible values of distance is narrow and as a result,
a vertex may have the same distance from many other vertices.
This yields that many vertices may have the same closeness/eccentricity scores.
However, the range of all possible values of the number of shortest paths between two vertices is much wider
and it varies between $1$ and an exponential function of $n$.
For example, on the one hand, in the graph of Figure~\ref{figure:exponential},
there are $2^{(n/2)-1}$ shortest paths between $s$ and $t$ and
on the other hand, in a tree there is only one shortest path between any two vertices.
This wider range yields that when the number of shortest paths is involved, the probability  
that two vertices have exactly the same score drastically decreases.

\subsection{\textbf{Intuitions}}
\label{sec:intuitions}

In several cases, the distance between two vertices in the graph not only depends on
their shortest path length, but also (inversely) on the number of shortest paths they have.
In the following, we discuss some of them.

\paragraph{Time and reliability of traveling}
A key issue in transportation and logistics \cite{Polus1979,Bonsall1997371} 
and in vehicular social networks (VSNs) \cite{DBLP:journals/cm/NingXUKH17} is
to estimate the {\em traveling time} and {\em route reliability}
between two points $A$ and $B$.
The time and reliability of traveling from $A$ to $B$
depend on the structure of the road network and also on the stochastic factors such as
weather, traffic incidents.
One of the factors that depends on the network structure is the 
number of ways that someone can travel from $A$ to $B$.
Having several ways to travel from $A$ to $B$, on the one hand, increases the reliability of traveling.
On the other hand, it deceases traffic between $A$ and $B$ and as a result, the traveling time.
Therefore, taking into account both the length and the number of shortest paths between $A$ and $B$
(in other words, defining the distance between $A$ and $B$
in terms of both the length and the number of shortest paths)
can help to better estimate the time and reliability of traveling between $A$ and $B$.

\paragraph{Spread of infections}
It it known that
infection and contact rates in a network depend on the community structure of the network and
the spread of infections
inside a community is faster \cite{CIS-138118,Pellis2011,0295-5075-72-2-315}.
Consider vertices $v_1$, $v_2$ and $v_3$ such that $d(v_1,v_2)=d(v_1,v_3)$,
$v_1$ and $v_2$ are in the same community but $v_1$ and $v_3$ are not.
An infection from $v_1$ usually spreads to $v_2$ faster than $v_3$,
or the probability that (after some time steps) $v_2$ becomes infected by $v_1$ is higher than 
the probability that $v_3$ becomes infected.
Here, to describe the distances between the vertices,
our discriminative distance measure is a better notion than the shortest path length.
Vertices $v_1$ and $v_2$ that are inside a community are heavily connected and as a result,
they usually have many shortest paths between themselves.
In contrast, $v_1$ and $v_3$ do not belong to the same community and are not heavily connected,
hence, they usually have less shortest paths between themselves.
This means $dd(v_1,v_2)$ is smaller than $dd(v_1,v_3)$, which is consistent with
the infection rate.

\section{Algorithms}
\label{sec:algorithms}

In this section, we discuss how discriminative indices can be computed.
First in Section~\ref{sec:exactalgorithm}, we present the exact algorithms
and then in Section~\ref{sec:randomized}, we present the approximate algorithms.

\subsection{Exact Algorithms}
\label{sec:exactalgorithm}

In this section, we present the \textsf{DCC}\footnote{\textsf{DCC} is an abbreviation
for \textbf{D}iscriminative \textbf Closeness \textbf Calculator.}
algorithm for computing
{\em discriminative closeness} centrality of all vertices of the network
and show how it can be revised to compute the other discriminative indices.

\begin{algorithm}
\caption{High level pseudo code of the algorithm of computing
{\em discriminative closeness} scores.}
\label{algorithm:dc3}
\begin{algorithmic} [1]
\STATE \textsf{DCC}
\STATE \textbf{Input.} A network $G$.
\STATE \textbf{Output.} {\em Discriminative closeness} centrality of vertices of $G$.
\FORALL {vertex $v \in V(G)$} 
\STATE $I[v] \leftarrow 0$.
\ENDFOR
\FORALL {vertex $v \in V(G)$} \label{line:dcc:loop1} 
\STATE $D,N \leftarrow \text{\textsc{ShortestPathDAG}}(G,v).$
\FORALL {vertex $u \in V(G)\setminus \{v\}$}
\STATE $I[v] \leftarrow I[v] + \frac{1}{n-1} \times \frac{ D[u] }{N[u]}$. \label{line:dcc:mainoperation} 
\ENDFOR
\ENDFOR \label{line:dcc:loop2}
\RETURN $I$.
\item[]
\end{algorithmic}
\end{algorithm}

Algorithm~\ref{algorithm:dc3} shows the high level pseudo code of
the algorithm.
\textsf{DCC} is an iterative algorithm
where at each iteration, 
discriminative closeness of a vertex $v$ is computed.
This is done by calling the \textsf{ShortestPathDAG} method for $v$.
Inside \textsf{ShortestPathDAG}, the distances and the number of shortest paths
between $v$ and all other vertices in the graph are computed.
If $G$ is unweighted, this is done by a breadth-first search starting from $v$.
Otherwise, if $G$ is weighted with positive weights,
this is done using Dijkstra's algorithm \cite{Dijkstra1959}.
A detailed description of
\textsf{ShortestPathDAG} can be found in several graph theory books, including \cite{Diestel},
hence, we here ignore it.
%

\paragraph{Computing the other indices}
Algorithm~\ref{algorithm:dc3} can be revised to compute
{\em average discriminative path length} of $G$,
{\em discriminative eccentricity} of vertices of $G$
and {\em average discriminative eccentricity} of $G$.
\begin{itemize}
\item
$ADPL(G)$. 
After Line~\ref{line:dcc:loop2} of Algorithm~\ref{algorithm:dc3}
(where the $I[v]$ values are already computed),
$ADPL(G)$
can be computed as $\frac{\sum_{v \in V(G)} I[v]}{n}$.

\item
$DE(v)$. 
If Line~\ref{line:dcc:mainoperation} of Algorithm~\ref{algorithm:dc3}
is replaced by the following lines:
\begin{center}
\begin{algorithmic}
\IF{$\frac{1}{n-1} \times \frac{D[u]}{N[u]} > I[v]$}
\STATE $I[v] \leftarrow \frac{1}{n-1} \times \frac{D[u]}{N[u]}$.
\ENDIF
\end{algorithmic}
\end{center}

then, the algorithm will compute discriminative eccentricity of the vertices of $G$
and will store them in $I$.

\item
$ADE(G)$. 
After computing {\em discriminative eccentricity} of all vertices of $G$ and
storing them in $I$,
$ADE(G)$ can be computed as 
$\frac{\sum_{v \in V(G)} I[v]}{n}$.
\end{itemize}

In a similar way, Algorithm~\ref{algorithm:dc3} can be revised
to compute discriminative diameter and discriminative radius of $G$.

\paragraph{Complexity analysis}
For unweighted graphs, each iteration of the loop in Lines~\ref{line:dcc:loop1}-\ref{line:dcc:loop2}
of method \textsc{DCC}
takes $O(m)$ time.
For weighted graphs with positive weights, using a Fibonnaci heap,
it takes $O(m+n \log n)$ time \cite{Diestel}.
This means discriminative closeness centrality and discriminative eccentricity of a given vertex
can be computed respectively in $O(m)$ time and $O(m+n \log n)$ time
for unweighted and weighted graphs with positive weights.
However, computing average discriminative path length and/or average eccentricity of the graph
requires respectively $O(nm)$ time and $O(nm+n^2 \log n)$ time
for unweighted graphs and weighted graphs with positive weights.
Space complexity of each iteration (and the whole algorithm), for both unweighted graphs
and weighted graphs with positive weights, is $O(n+m)$ \cite{Diestel}.
Note that these complexities are the same as complexities 
of computing traditional distance-based indices.
The reason is that
in addition to computing the distances between $v$ and all other vertices of the graph,
\textsf{ShortestPathDAG} can also compute the number of shortest paths,
without having any increase in the complexity \cite{Diestel}.

\subsection{Randomized Algorithms}
\label{sec:randomized}

While discriminative closeness and discriminative eccentricity of a given vertex can be computed
efficiently,
the algorithms of computing average discriminative path length and 
average discriminative eccentricity of the graph are
expensive in practice, even for mid-size networks.
This motivates us to present randomized algorithms
for ADPL and ADE that can be performed much faster,
at the expense of having approximate results.

\begin{algorithm}
\caption{High level pseudo code of the algorithm of
estimating {\em average discriminative path length}.}
\label{algorithm:apl}
\begin{algorithmic} [1]
\STATE \textsf{RandomADPL}
\STATE \textbf{Input.} A network $G$ and the number of samples $T$.
\STATE \textbf{Output.} Estimated {\em average discriminative path length} of $G$.
\STATE $\beta \leftarrow 0$.
\FORALL{$t=1$ \textbf{to} $T$ } \label{line:loop1}
\STATE Select a vertex $v \in V(G)$ uniformly at random.
\STATE $D,N \leftarrow \text{\textsf{ShortestPathDAG}}(G,v)$.
\STATE $\beta_t \leftarrow \frac{1}{n-1} \times \sum_{u\in V(G)\setminus \{v\}} \frac{D[u]}{N[u]}$. \label{line:apl:main}
\STATE $\beta \leftarrow \beta + \beta_t $.
\ENDFOR \label{line:loop2}
\STATE $\beta \leftarrow \frac{\beta}{T}$.
\RETURN $\beta$.
\end{algorithmic}
\end{algorithm}

Algorithm~\ref{algorithm:apl} shows the high level pseudo code of 
the \textsf{RandomADPL} algorithm, proposed to estimate average discriminative path length.
The inputs of the algorithm are the graph $G$ and the number of samples (iterations) $T$.
In each iteration $t$, the algorithm first chooses a vertex $v$ uniformly at random and
calls the \textsf{ShortestPathDAG} method for $v$ and $G$,
to compute distances and the number of shortest paths
between $v$ and any other vertex in $G$.
Then, it estimates average discriminative path length of $G$ at iteration $t$ as
$\frac{1}{n-1}\times \sum_{u\in V(G)\setminus \{v\}} \frac{d(v,u)}{\sigma(v,u)}$
and stores it in $\beta_t$.
The average of all $\beta_t$ values computed during
different iterations gives the final estimation $\beta$ of
average discriminative path length. 
Clearly, for unweighted graphs,
time complexity of Algorithm \ref{algorithm:apl} is $O(T \times m)$ and
for weighted graphs with positive weights,
it is $O(T \times m + T \times n \log n)$.
In a way similar to Algorithm~\ref{algorithm:dc3}, Algorithm~\ref{algorithm:apl} can be modified to estimate
{\em discriminative eccentricity} of graph $G$,
where the details are omitted.

In the rest of this section, we provide an error bound for our estimation of
{\em average discriminative path length}.
First in Proposition~\ref{proposition:expectedvalue},
we prove that in Algorithm~\ref{algorithm:apl} the expected value of $\beta$ is $ADPL(G)$.
Then in Proposition~\ref{proposition:errorbound},
we provide an error bound for $\beta$.

\begin{proposition}
\label{proposition:expectedvalue}
In Algorithm~\ref{algorithm:apl}, expected value of $\beta_t$'s ($1\leq t \leq T$)
and $\beta$ is $ADPL(G)$.
\end{proposition}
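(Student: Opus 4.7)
The plan is to compute $E[\beta_t]$ directly from the definition of expectation, using the fact that $v$ is drawn uniformly at random from $V(G)$, and then invoke linearity of expectation to transfer the result from a single sample $\beta_t$ to the average $\beta = \frac{1}{T}\sum_{t=1}^T \beta_t$.

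First, I would fix an iteration $t$ and condition on the random choice of the sampled vertex $v$. Since $v$ is chosen uniformly from $V(G)$, each vertex is picked with probability $\frac{1}{n}$. By Line~\ref{line:apl:main} of Algorithm~\ref{algorithm:apl}, whenever $v$ is selected we have $\beta_t = \frac{1}{n-1}\sum_{u \in V(G)\setminus\{v\}} \frac{d(v,u)}{\sigma(v,u)}$. Therefore
\begin{equation*}
E[\beta_t] = \sum_{v \in V(G)} \frac{1}{n} \cdot \frac{1}{n-1} \sum_{u \in V(G)\setminus\{v\}} \frac{d(v,u)}{\sigma(v,u)} = \frac{1}{n(n-1)} \sum_{v \in V(G)} \sum_{u \in V(G)\setminus\{v\}} \frac{d(v,u)}{\sigma(v,u)},
\end{equation*}
which is exactly $ADPL(G)$ by Equation~\ref{eq:averagediscriminativepathlength}.

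Next, I would handle $\beta$. Since $\beta = \frac{1}{T}\sum_{t=1}^T \beta_t$ and the $\beta_t$'s are identically distributed (each obtained from an independent uniform sample of $V(G)$), linearity of expectation gives $E[\beta] = \frac{1}{T}\sum_{t=1}^T E[\beta_t] = ADPL(G)$, concluding the proof.

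There is no real obstacle here: the argument is a one-line application of the definition of expectation for a uniform sampler, followed by linearity. The only subtlety worth flagging explicitly is that independence of the samples is not needed for the expectation claim (only identical distribution), although independence will matter for the concentration bound in Proposition~\ref{proposition:errorbound}.
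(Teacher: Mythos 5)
Your proposal is correct and follows essentially the same route as the paper: compute $\mathbf E[\beta_t]$ by summing over the uniform choice of $v$ with weight $\frac{1}{n}$ to recover $ADPL(G)$, then apply linearity of expectation to $\beta = \frac{1}{T}\sum_{t=1}^T \beta_t$. Your added remark that only identical distribution (not independence) is needed for this step is a fair observation but does not change the argument.
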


\begin{proof}
We have:
\begin{align*}
\mathbf E\left[\beta_t\right] &= \sum_{v \in V(G)} 
\left( \frac{1}{n} \times \frac{\sum_{u\in V(G)\setminus \{v\}} \frac{d(v,u)}{\sigma(v,u)} }  {n-1} \right)  
                 = ADPL(G),
\end{align*}
where $\frac{1}{n}$ comes from the uniform distribution used to choose vertices of $G$.
Then, we have:
$\mathbf E\left[\beta\right] = \frac{\sum_{t=1}^T \mathbf E\left[\beta_t\right]}{T}
                 = \frac{T \times \mathbf E\left[\beta_t\right]}{T} = ADPL(G). $             
\end{proof}

\begin{proposition}
\label{proposition:errorbound}
In Algorithm~\ref{algorithm:apl}, let $G$ be a connected and undirected graph.
For a given $\epsilon \in \mathbb R^+$, we have:
\begin{equation}
\label{eq:errorbound}
\mathbf P\left[\left| ADPL(G) - \beta \right| > \epsilon \right] \leq
2\exp \left( -2 \times T \times  \left( \frac{\epsilon}{\Delta(G)} \right)^2  \right).
\end{equation}
\end{proposition}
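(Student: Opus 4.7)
The plan is to apply Hoeffding's inequality to the sequence of per-iteration estimates $\beta_1,\ldots,\beta_T$ produced in Algorithm~\ref{algorithm:apl}. The vertex sampled in each iteration is drawn uniformly and independently of the others, so the random variables $\beta_1,\ldots,\beta_T$ are i.i.d., and by Proposition~\ref{proposition:expectedvalue} each has expectation equal to $ADPL(G)$. The estimator $\beta$ is their empirical mean. Hoeffding's inequality then yields a concentration bound of exactly the claimed form, provided we can pin down the range in which $\beta_t$ takes its values.

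The key quantitative step is therefore a range bound for $\beta_t$. Since $G$ is connected, $\sigma(v,u)\geq 1$ for every ordered pair $v\neq u$, so each summand satisfies
\begin{equation*}
0 \;\leq\; \frac{d(v,u)}{\sigma(v,u)} \;\leq\; d(v,u) \;\leq\; \Delta(G).
\end{equation*}
Averaging $n-1$ such terms, we obtain $\beta_t \in [0,\Delta(G)]$ almost surely, so the range length of $\beta_t$ is at most $\Delta(G)$.

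Plugging these facts into Hoeffding's inequality for the mean of $T$ i.i.d.\ random variables bounded in an interval of length $\Delta(G)$ gives
\begin{equation*}
\mathbf P\!\left[\,|\beta - \mathbf E[\beta]|>\epsilon\,\right] \;\leq\; 2\exp\!\left(-\frac{2T\epsilon^{2}}{\Delta(G)^{2}}\right),
\end{equation*}
which is exactly (\ref{eq:errorbound}) once we substitute $\mathbf E[\beta]=ADPL(G)$ from Proposition~\ref{proposition:expectedvalue}.

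There is no real obstacle here beyond verifying the range claim; the hardest part is simply arguing cleanly that $\sigma(v,u)\geq 1$ (which uses connectedness) so that the fraction $d(v,u)/\sigma(v,u)$ is bounded above by $\Delta(G)$. Once that inequality is in hand, the result is an off-the-shelf application of Hoeffding to an i.i.d.\ sample with known range and known mean.
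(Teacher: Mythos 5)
Your proposal is correct and follows essentially the same route as the paper's own proof: both bound each summand by $0 \leq d(v,u)/\sigma(v,u) \leq \Delta(G)$ using connectedness (so $\sigma(v,u)\geq 1$), conclude $\beta_t \in [0,\Delta(G)]$, note the independence of the $\beta_t$'s, and apply Hoeffding's inequality with $a=0$, $b=\Delta(G)$, $n=T$, together with Proposition~\ref{proposition:expectedvalue} for the mean. No gaps; the argument matches the paper's.
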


\begin{proof}
The proof is done using Hoeffding's inequality \cite{Hoeffding:1963}.
Let $X_1, \ldots, X_n$ be independent random variables bounded by the interval 
$[a, b]$, i.e., $a \leq X_i \leq b$ ($1 \leq i \leq n$).
Let also $\bar{X}=\frac{1}{n}\left( X_1 + \ldots + X_n \right)$.
Hoeffding \cite{Hoeffding:1963} showed that:
\begin{equation}
\label{eq:hoeffding}
\mathbf P\left[ \left| \mathbf E\left[ \bar{X} \right] - \bar{X} \right| > \epsilon \right] \leq
2 \exp \left( -2 n \left( \frac{\epsilon}{b-a} \right)^2 \right).
\end{equation}

On the one hand,
for any two distinct vertices $v,u \in V(G)$, we have: $d(v,u) \leq \Delta(G)$
and $\sigma(v,u) \geq 1$.
Therefore, $\frac{d(v,u)}{\sigma(v,u)} \leq \Delta(G)$ and as a result,
$\beta_t \leq \Delta(G)$ ($1 \leq t \leq T$).
On the other hand,
for any two distinct vertices $v,u \in V(G)$, we have:
$\frac{d(v,u)}{\sigma(v,u)}>0 $. Therefore, $\beta_t>0$, for $1 \leq t \leq T$.
Note that in Algorithm~\ref{algorithm:apl} vertices $u$
are chosen independently and therefore,
variables $\beta_t$ are independent.
Hence, we can use Hoeffding's inequality,
where
$X_i$'s are $\beta_t$'s,
$\bar{X}$ is $\beta$,
$n$ is $T$,
$a$ is $0$ and
$b$ is $\Delta(G)$.
Putting these values into Inequality~\ref{eq:hoeffding}
yields Inequality~\ref{eq:errorbound}. 
\end{proof}

Real-world networks have a small diameter,
bounded by the logarithm of the number of vertices in the network \cite{watts1998cds}.
This, along with Inequality~\ref{eq:errorbound}, yields\footnote{Note that
in Inequality~\ref{eq:errorbound2},
both $\beta$ and $\epsilon$ are in $\mathbb R^+$ and
since $\beta$ and its expected value are not bounded by $(0,1)$ and
they are considerably larger than 0 (and they can be larger than 1),
$\epsilon$ is usually set to a value much larger than 0 (and even larger than 1,
such as $\log n$).}:
\begin{equation}
\label{eq:errorbound2}
\mathbf P\left[\left| ADPL(G) - \beta \right| > \epsilon \right] \leq 2\exp
\left( -2 \times T \times  \left( \frac{\epsilon}{\log n} \right)^2  \right).
\end{equation}

Inequality~\ref{eq:errorbound2} says that for given
values $\epsilon \in \mathbb R^+$ and $\delta \in (0,1)$,
if $T$ is chosen such that
$T \geq 
\frac{\ln\left(\frac{2}{\delta} \right) \left(\log n \right)^2}{2 {\epsilon}^2 },$
Algorithm~\ref{algorithm:apl} estimates
average discriminative path length of $G$
within an additive error $\epsilon$ with a probability at least $\delta$.
Our extensive experiments reported in Table~\ref{table:dataset} of
Section~\ref{sec:experimentalresults} (the rightmost column)
show that many real-world networks have a very small {\em discriminative diameter},
much smaller than the logarithm of the number of vertices they have.
So, we may assume that their discriminative diameter is bounded by a constant $c$.
For such networks, 
using only $\frac{c^2 \times \ln\left(\frac{2}{\delta} \right) }{2 {\epsilon}^2 }$ samples,
Algorithm~\ref{algorithm:apl} can estimate
average discriminative path length 
within an additive error $\epsilon$ with a probability at least $\delta$.

\makesavenoteenv{tabular}
\makesavenoteenv{table}

%
%

\begin{table*}
\caption{\label{table:dataset}Specifications of the largest component of the real-world datasets.}
\begin{center}
\begin{tabular}{ l | p{8.5cm} l l p{2cm}}
\hline
Dataset & Link & \# vertices & \# edges & Discriminative diameter  \\
\hline
dblp0305 & \url{http://www-kdd.isti.cnr.it/GERM/}  & 109,045 & 233,962 & 2 \\
dblp0507 &\url{http://www-kdd.isti.cnr.it/GERM/}  & 135,116 & 290,364  & 2 \\
dblp9202 & \url{http://www-kdd.isti.cnr.it/GERM/} & 129,074 & 277,082   & 2\\
facebook-uniform & \url{http://odysseas.calit2.uci.edu/doku.php/public:online_social_networks} & 134,304 & 135,532  & 2 \\

flickr & \url{http://konect.uni-koblenz.de/networks/flickrEdges} & 73,342 & 2,619,711  & 5 \\

gottron-reuters & \url{http://konect.uni-koblenz.de/networks/gottron-reuters} & 38,677 & 978,461  & 5 \\
petster-friendships & \url{http://konect.uni-koblenz.de/networks/petster-friendships-cat}  &  148,826 & 5,449,508  & 8 \\
pics\_ut & \url{http://konect.uni-koblenz.de/networks/pics_ut} & 82,035 & 2,300,296  &  5 \\
web-Stanford & \url{http://snap.stanford.edu/data/web-Stanford.html} & 255,265 & 2,234,572  & 16 \\
web-NotreDame & \url{http://snap.stanford.edu/data/web-NotreDame.html} & 325,729 & 1,524,589  & 28 \\
citeulike-ut & \url{http://konect.uni-koblenz.de/networks/citeulike-ut} & 153,277 & 2,411,940  & 7 \\
epinions & \url{http://konect.uni-koblenz.de/networks/epinions} & 119,130 & 834,000  &  15 \\
wordnet & \url{http://konect.uni-koblenz.de/networks/wordnet-words} & 145,145 & 656,230   & 15 \\
\hline
\end{tabular}
\end{center}
\end{table*}

\begin{table*}
\caption{\label{table:discriminability}
Comparison of {\em discriminability} of the centrality notions
over different real-world networks.
For each dataset, the most discriminative index is highlighted in bold.
}
\centering
\begin{tabular}{ l | p{2.1cm}  l l p{2.2cm} p{2.3cm} l }
\hline \hline
Database & Discriminative closeness &  Closeness & Betweenness & Length scaled betweenness & Linearly scaled betweenness & Katz \\ 
\hline
dblp0305  & $\bf{2.7805}$ & $0.0201$ & $0.0403$ & $0.0403$ & $0.0403$  & $0.4447$ \\
dblp0507   &  $\bf{2.7013}$ & $0.0155$ & $0.0325$ & $0.0325$ & $0.0325$   & $0.3804$ \\
dblp9202   &  $\bf{3.2973}$ & $0.0147$ & $0.0263$ & $0.0263$ & $0.0263$  & $0.3091$ \\
facebook-uniform   &  $\bf{5.6178}$ & $0.0446$ & $0.0528$ & $0.0528$  & $0.0528$  &  $0.9039$ \\
flickr  &  $\bf{92.7694}$ & $4.4435$ & $84.8381$ & $85.0835$  & $85.0835$  & $90.4365$ \\
gottron-reuters   &  $\bf{88.9934}$ & $25.8810$ & $74.3956$ & $74.3956$  & $74.3956$  & $88.9753$ \\
petster-friendships  &  $\bf{70.0764}$ & $39.2176$ & $65.7049$ & $65.7317$ & $65.7317$  & $70.0260$ \\
pics\_ut &  $\bf{50.5113}$ &  $36.3552$ & $33.4028$ & $33.5210$  & $33.5210$  & $42.6196$ \\
web-Stanford  &  $\bf{97.3376}$ & $18.9258$ & $26.6542$ & $27.2861$  & $27.2861$  &  $31.6122$ \\
web-NotreDame  &  $\bf{29.9819}$ & $18.5230$ & $18.1245$ & $19.2402$  & $19.2402$  & $19.0277$ \\
citeulike-ut  &  $\bf{45.2540}$ & $30.4546$ & $28.6135$ & $28.7185$  & $28.7185$  & $34.3032$ \\
epinions  &  $\bf{70.0218}$  & $57.0679$ & $42.1220$ & $45.5745$  & $45.5745$  & $60.1922$ \\
wordnet  &  $\bf{58.8907}$ & $51.8770$ & $38.8838$ & $40.5187$  &  $40.5187$  & $52.8340$ \\
\hline \hline
\end{tabular}
\end{table*}

\begin{figure*}
\centering
\includegraphics[scale=0.5]{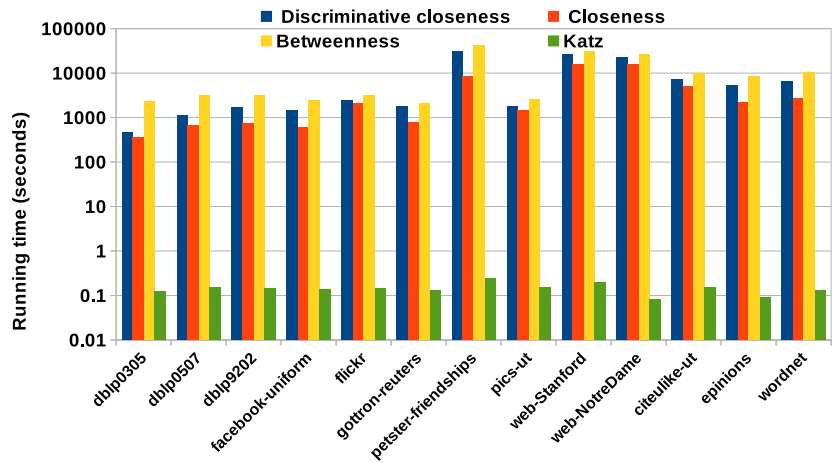}
\caption
{\label{fig:time}
Running time of computing discriminative closeness centrality, closeness centrality and
betweenness centrality (and its variations).
The vertical axis is in logarithmic scale.}
\end{figure*}

\section{Experimental results}
\label{sec:experimentalresults}

We perform extensive experiments
on real-world networks to assess
the quantitative and qualitative behavior of our proposed algorithms.
The programs are compiled by the GNU C++ compiler 5.4.0 using optimization level 3.
We do our tests over (largest connected components of)
several real-world datasets from different domains, including
the {\em dblp0305}, {\em dblp0507} and {\em dblp9202}
co-authorship networks \cite{Berlingerio:conf},
the {\em facebook-uniform} social network \cite{Gjoka:conf},
the {\em flickr} network \cite{konect:McAuley2012},
the {\em gottron-reuters} network \cite{konect:reuters1},
the {\em petster-friendships} network \cite{konect},
the {\em pics\_ut} network \cite{konect},
the {\em web-Stanford} network \cite{jrnl:Leskovec3},
the {\em web-NotreDame} network \cite{albert1999dww},
the {\em citeulike-ut} network \cite{konect:emamy07},
the {\em epinions} network \cite{konect:massa05} and
the {\em wordnet} network \cite{konect:fellbaum98}.
All the networks are treated as undirected graphs.
When a graph is disconnected, we consider only its largest component.
Table~\ref{table:dataset} summarizes specifications of the largest components of
our real-world networks.

\subsection{Empirical Evaluation of Discriminability}
\label{sec:empiricaldiscriminability}

We measure {\em discriminability} of a centrality notion in terms of its power
in assigning distinguished values to the vertices.
Hence, for each centrality notion and over each network $G$,
we define {\em discriminability} as: 
\[\frac{\#\text{distinct centrality scores}}{\#\text{vertices of }G}\times 100.\]
Among different distance-based notions studied in this paper,
we investigate discriminability of {\em discriminative closeness} centrality.
The reason is that on the one hand, notions such as
average discriminative path length and average discriminative eccentricity are graph characteristics,
rather than vertex properties.
Hence, it does not make sense to measure their discriminability.
On the other hand, closeness centrality is a much more common distance-based notion
to rank vertices than
the other distance-based notions such eccentricity.
We compare {\em discriminative closeness} centrality against {\em closeness} centrality,
as well as a number of centrality notions that are not based on distance, including
{\em betweenness} centrality \cite{DBLP:journals/cj/Chehreghani14},
{\em length-scaled betweenness} centrality \cite{Brandes08onvariants}
{\em linearly scaled betweenness} centrality \cite{Brandes08onvariants}
and {\em Katz} centrality \cite{Katz1953}\footnote{To compute betweenness centrality and its variations,
we use boost graph library (\url{http://www.boost.org/doc/libs/1_66_0/libs/graph/doc/index.html}) and
to compute Katz centrality, we use NetworKit (\url{https://networkit.iti.kit.edu/}),
where all these algorithms are implemented in C++.}.
Katz centrality has three parameters to adjust: the damping factor $\alpha$,
the bias constant $\beta$ and the convergence tolerance $tol$.
Similar to e.g., \cite{Zhan2017}, we set $\beta$ to $1$.
In order to guarantee convergence, $\alpha$ must be less than
the inverse of the largest eigenvalue of the graph.
Here we set it to $0.001$, which is one of the values used in the experiments of \cite{Zhan2017}.
We set $tol$ to $1e-10$.



Table~\ref{table:discriminability} reports the discriminability results.
In the table, a 'higher percentage' means a 'higher discriminability' of the centrality notion.
The followings can be seen in the table.
First, 
{\em discriminative closeness centrality} is always more discriminative than the other indices.
Second, over datasets such as  
{\em dblp0305}, {\em dblp0507}, {\em dblp9202}, {\em facebook-uniform} and {\em web-Stanford},
discriminability of discriminative closeness centrality is significantly larger than
discriminability of the other indices.
In fact, when over a network discriminability of the other indices is very low,
discriminative closeness centrality becomes significantly more discriminative than them.
However, when the other indices are discriminative enough,
the difference between discriminability of the indices is less considerable.
Third, Katz centrality is usually more discriminative than closeness centrality,
betweenness centrality, length scaled betweenness centrality and linearly scaled betweenness centrality. 
The only exception is {\em web-NotreDame}, where 
length scaled betweenness centrality and linearly scaled betweenness centrality are more discriminative than
Katz centrality.
However, unlike the other indices, Katz centrality uses all paths,
which makes it improper for the applications where the concept of shortest paths is essential.
Fourth, while in most cases betweenness centrality and its variations are more discriminative
than closeness centrality, in a few cases,
e.g., over {\em pics\_ut}, {\em epinions} and {\em wordnet},
closeness centrality is more discriminative than betweenness centrality and its variations.
Fifth, while length scaled and linearly scaled betweenness centrality 
always show the same discriminability, they slightly improve discriminability of betweenness centrality.
However, this improvement is not considerable.

Figure~\ref{fig:time} compares running times of computing different indices.
Since betweenness centrality, length scaled betweenness centrality and linearly scaled betweenness centrality 
follow exactly the same procedure and differ only in the way of aggregating the computed scores,
we report only one time for all of them.
As can be seen in the figure, since Katz centrality does not find shortest paths
and at each vertex, simply follows all its neighbors,
it is computed much faster than the other indices.
Closeness centrality is computed faster than discriminative closeness 
and discriminative closeness is computed faster betweenness centrality and its variations.
Note that the algorithms of computing closeness centrality and discriminative closeness centrality
have the same time complexity.
However, to compute discriminative closeness centrality,
compared to closeness centrality
we require to perform extra operations (e.g., counting the number of shortest paths).
This makes it in practice slower than closeness centrality.

\subsection{Empirical Evaluation of Randomized Algorithms}
\label{sec:empiricalrandomized}

\begin{table*}
\caption{\label{table:adpl}
Relative error of our randomized average discriminative path length estimation algorithm.}
\centering
\begin{tabular}{l|p{1.5cm} p{1.5cm}| p{2.4cm} p{2.7cm} }
\hline \hline
\multirow{2}{*}{Database}& \multicolumn{2}{|c|}{Exact values}  & \multicolumn{2}{c}{Approximate algorithm} \\
\cline{2-5}
 & \multirow{1}{*}{APL}  & \multirow{1}{*}{ADPL} & Sample size ($\%$) & Relative error ($\%$) \\
\hline \hline
dblp0305   & $1.99997$ &  $1.99995$ & $10$ & $0.0016$\\
           & $ $ &  $ $ & $1$ & $0.0016$ \\
           & $ $ &  $ $ & $0.1$ & $0.0016$ \\
\hline          
dblp0507   & $1.99997$ &  $1.99996$ & $10$ & $0.0008$ \\
           & $ $ &  $ $ & $1$ & $0.0008$ \\
           & $ $ &  $ $ & $0.1$ & $0.0008$ \\
\hline
dblp9202   & $1.99997$ &  $1.99996$ & $10$ & $0.0015$\\
           & $ $ &  $ $ & $1$ & $0.0015$\\
           & $ $ &  $ $ & $0.1$ & $0.0012$\\
\hline
facebook-uniform & $1.99998$ &  $1.99997$ & $10$ & $0.0097$ \\
           & $ $ &  $ $ & $1$ & $0.0099$  \\
           & $ $ &  $ $ & $0.1$ & $0.0102$  \\
\hline
flickr     & $ 2.3078$ &  $0.2787$ & $10$ & $2.5639$ \\
           & $ $ &  $ $ & $1$ & $0.2887$ \\
           & $ $ &  $ $ & $0.1$ & $2.7859$ \\
\hline
gottron-reuters & $ 2.9555$ &  $0.6860$ & $10$ & $0.5827$ \\
              & $ $ &  $ $ & $1$ & $3.6259$\\
              & $ $ &  $ $ & $0.1$ & $19.3603$\\
\hline
petster-friendships  & $2.7028$ &  $0.2220$ & $10$ & $0.8221$ \\
              & $ $ &  $ $ & $1$ & $0.4389$\\
              & $ $ &  $ $ & $0.1$ & $2.4948$\\              
\hline
pics\_ut & $ 3.6961$ &  $0.2953$ & $10$ & $0.4478$ \\
              & &  $ $ & $1$ & $1.8667$\\
              & &  $ $ & $0.1$ & $2.9500$\\
\hline
web-Stanford  & $6.8152$ &  $0.9509$ & $10$ & $0.6261$ \\
              & $ $ &  $ $ & $1$ & $1.4910$\\
              & $ $ &  $ $ & $0.1$ & $2.7938$\\
\hline
web-NotreDame & $7.1731$ &  $1.5856$ & $10$ & $0.0618$ \\
              & $ $ &  $ $ & $1$ & $ 0.6948$\\
              & $ $ &  $ $ & $0.1$ & $2.9328$\\
\hline
citeulike-ut  & $ 3.9376$ &  $0.2361$ & $10$ & $0.0355$ \\
              & $ $ &  $ $ & $1$  & $1.6612$\\
              & $ $ &  $ $ & $0.1$  & $2.3498$\\
\hline
epinions      & $4.1814$ &  $0.9098$ & $10$ & $0.0240$ \\
              & $ $ &  $ $ & $1$  & $0.7403$\\
              & $ $ &  $ $ & $0.1$ & $0.7527$\\
\hline
wordnet       & $5.5320$ &  $1.1141$ & $10$ & $0.1610$ \\
              & $ $ &  $ $ & $1$ & $0.6710$\\
              & $ $ &  $ $ & $0.1$  & $2.7228$\\
\hline \hline
\end{tabular}
\end{table*}

Table~\ref{table:adpl} presents the results of the empirical evaluation of
our proposed randomized algorithm for estimating average discriminative path length.
When estimating average discriminative path length or average discriminative eccentricity,
we define relative error of the approximation algorithm as:
\[\frac{|\text{\em exact score} - \text{\em approximate score}|}{\text{\em exact score}}\times 100,\]
where
{\em exact score} and {\em approximate score} are respectively
the values computed by the exact and approximate algorithms.
Sample sizes are expressed in terms of the percentages of the number of vertices of the graph.
We examine the algorithm for three sample sizes:
$10\%$ of the number of vertices,
$1\%$ of the number of vertices
and $0.1\%$ of the number of vertices.
As can be seen in the table, only a very small sample size, e.g., $0.1\%$ of the number of vertices,
is sufficient to have an accurate estimation of average discriminative path length.
Over all the datasets, except {\em gottron-reuters},  
this sample size gives a relative error less than $3\%$.
In particular, relative error in the datasets
{\em dblp0305}, {\em dblp0507}, {\em dblp9202} and {\em facebook-uniform}
is very low.
This is consistent with our analysis presented in Section~\ref{sec:randomized}
and is due to very small discriminative diameter of these networks.

Table~\ref{table:adpl} also compares
average discriminative path length of the networks with their average path length.
For all the datasets, except {\em dblp0305}, {\em dblp0507}, {\em dblp9202} and {\em facebook-uniform},
average discriminative path length is considerably smaller than average path length.
It may seem surprising that despite very high discriminability of {\em discriminative closeness} 
compared to {\em closeness} over {\em dblp0305}, {\em dblp0507}, {\em dblp9202} and {\em facebook-uniform},
the differences between {\em average discriminative path length} and {\em average path length} are tiny.
The reason is that over these datasets,
on the one hand, between a huge number of pairs of vertices there is only one shortest path;
a few pairs have two shortest paths and only a very tiny percentage of pairs have three or more shortest paths.
Therefore, those pairs that have more than one shortest paths do not have a considerable contribution
to $ADPL$, hence, $ADPL$ and $APL$ find very close values. 
However, on the other hand,
for each vertex $v$ there are a different number of vertices to which $v$ is connected by
two or more shortest paths.
This is sufficient to distinguish its discriminative closeness from the other vertices.


Table~\ref{table:ade} reports the results of the empirical evaluation of
our randomized algorithm for estimating average discriminative eccentricity.
Similar to the case of average discriminative path length,
we test the algorithm for three different sample sizes and our experiments
show that only a small sample size, e.g., $0.1\%$ of the number of vertices,
can yield a very accurate estimation of average discriminative eccentricity.
In our experiments, for the sample size $0.1\%$, relative error is always less than $5\%$.
This high accuracy is due to very small {\em discriminative diameter} of the networks.
Similar to the case of {\em average eccentricity}
where a simple randomized algorithm significantly
outperforms advanced techniques \cite{Shun:2015:EPE:2783258.2783333},
our simple algorithms show very good efficiency and accuracy
for estimating average 
discriminative path length and average discriminative eccentricity.
Table~\ref{table:ade} also shows that
similar to $ADPL$,
while the datasets {\em dblp0305}, {\em dblp0507}, {\em dblp9202} and {\em facebook-uniform}
have (almost) the same values for $AE$ and $ADE$,
over the rest of the datasets $ADE$ is less than $AE$.


\begin{table*}
\caption{\label{table:ade}
Relative error of our randomized average discriminative eccentricity estimation algorithm.}
\centering
\begin{tabular}{l| l l | p{2.4cm} p{2.7cm} }
\hline \hline
\multirow{2}{*}{Database}& \multicolumn{2}{|c|}{Exact values}  & \multicolumn{2}{c}{Approximate algorithm} \\
\cline{2-5}
 & \multirow{1}{*}{AE ($\times 1000$)} & \multirow{1}{*}{ADE ($\times 1000$)} & Sample size ($\%$) & Relative error ($\%$)\\
\hline \hline
dblp0305    & $0.0183$ & $0.0183$ & $10$  & $0.0013$\\
            & $ $ &  $ $ & $1$  & $0.0013$ \\
            & $ $ &  $ $ & $0.1$ & $0.0013$ \\
\hline          
dblp0507    & $0.0148$ &  $0.0148$ & $10$  & $0.0007$\\
            & $ $ &  $ $ & $1$  & $0.0007$ \\
            & $ $ &  $ $ & $0.1$ & $0.0007$ \\
\hline
dblp9202    & $0.0154$ &  $0.0154$ & $10$  & $0.0015$\\
            & $ $ &  $ $ & $1$  & $0.0015$ \\
            & $ $ &  $ $ & $0.1$ & $0.0015$ \\
\hline
facebook-uniform  & $0.0148$ &  $0.0148$ & $10$  & $0.0096$\\
            & $ $ &  $ $ & $1$  & $0.0096$ \\
            & $ $ &  $ $ & $0.1$ & $0.0096$ \\
\hline
flickr      & $0.0566$ &  $0.0323$ & $10$  & $0.2627$\\
            & $ $ &  $ $ & $1$  & $1.1411$ \\
            & $ $ &  $ $ & $0.1$ & $1.6568$ \\
\hline
gottron-reuters   & $0.1159$ &  $0.0898$ & $10$  & $0.2327$\\
            & $ $ &  $ $ & $1$  & $0.4596$ \\
            & $ $ &  $ $ & $0.1$ & $4.4685$ \\
\hline
petster-friendships  & $0.0432$ &  $0.0293$ & $10$  & $0.0749$\\
            & $ $ &  $ $ & $1$  & $0.04465$ \\
            & $ $ &  $ $ & $0.1$ & $2.3459$ \\
\hline
pics\_ut    & $0.0636$ &  $0.0450$ & $10$  & $0.0604$\\
            & $ $ &  $ $ & $1$  & $0.7933$ \\
            & $ $ &  $ $ & $0.1$ & $0.8762$ \\
            \hline
web-Stanford  & $0.4171$ &  $0.0314$ & $10$  & $0.3697$\\
              & $ $ &  $ $ & $1$  & $1.0153$ \\
              & $ $ &  $ $ & $0.1$ & $2.4259$ \\
\hline
web-NotreDame  & $0.0852$ &  $0.0399$ & $10$  & $0.0235$\\
               & $ $ &  $ $ & $1$  & $0.4150$ \\
               & $ $ &  $ $ & $0.1$ & $0.4150$ \\
\hline
citeulike-ut   & $0.0406$ &  $0.0262$ & $10$  & $0.3076$\\
            & $ $ &  $ $ & $1$  & $0.2016$ \\
            & $ $ &  $ $ & $0.1$ & $3.2315$ \\
\hline
epinions    & $0.0894$ &  $0.0676$ & $10$  & $0.0780$\\
            & $ $ &  $ $ & $1$  &  $0.2170$ \\
            & $ $ &  $ $ & $0.1$ & $0.3784$ \\
\hline
wordnet     & $0.0780$ &  $0.0589$ & $10$  & $0.0724$\\
            & $ $ &  $ $ & $1$  &  $0.4879$ \\
            & $ $ &  $ $ & $0.1$ & $0.5011$ \\
\hline \hline
\end{tabular}
\end{table*}

\subsection{The Tiny-World Property}
\label{sec:tinyworld}

It is well-known that in real-world networks,
{\em average path length} is 
proportional to the logarithm of the number of vertices in the graph and it is
considerably smaller than the {\em largest distance} that two vertices may have in a graph \cite{watts1998cds}.
Our extensive experiments presented in Table~\ref{table:adpl}
reveal that
in real-world networks
{\em average discriminative path length} is 
much more smaller than the {\em largest discriminative distance}\footnote{Both
the {\em largest distance} and the {\em largest discriminative distance} that two vertices may have in a graph are equal to
the number of vertices in the graph minus 1.}
that two vertices may have in a graph
and it is bounded by a constant (i.e., $2$).
This also implies that {\em average discriminative path length} of a network
is usually considerably smaller than its {\em average path length}.

This property means that in real-world networks,
not only most vertices can be reached from every other vertex by a small number of steps,
but also there are many different ways to do so.
We call this property the {\em tiny-world} property.
A consequence of this property is that
removing several vertices from a real-world network
does not have a considerable effect on its {\em average path length}.
Note that this property does not contradict the high discriminability of discriminative closeness;
while this property implies that vertices in average have a tiny discriminative distance from each other,
the high discriminability of discriminative closeness implies that
the discriminative closeness scores of different vertices are less identical than their closeness scores.
In Table~\ref{table:adpl},
it can be seen that networks such as {\em flickr}, {\em petster-friendships}, {\em pics\_ut} and {\em citeulike-ut}
have an average discriminative path length considerably smaller than the others.
This is due to the high density of these networks,
which yields that any two vertices may have a shorter distance or more shortest paths.


\section{Link prediction}
\label{sec:linkprediction}

In order to better motivate the applicability and usefulness of our proposed
distance measure,
in this section we present a novel {\em link prediction} method,
which is based on our new distance measure.
We empirically evaluate our method and
show that it outperforms the well-known existing link prediction methods.

In the link prediction problem studied in this paper, 
we are given an unweighted and undirected graph $G$ in which each 
 edge $e = \{u,v\}$ has a timestamp.
For a time $t$,
let $G[t]$ denote the subgraph of $G$ consisting of all edges
with a timestamp less than or equal to $t$.
Then the link prediction task is defined as follows.
Given network $G[t]$ and a time $t'>t$,
(partially) sort the list of all pairs of vertices that are not connected in $G[t]$,
according to their probability (likelihood)
of being connected during the interval $(t,t']$.
We refer to the intervals $[0,t]$ and $(t,t']$ as the {\em training interval} and 
the {\em test interval}, respectively.

To generate this (decreasingly) sorted list,
existing methods 
during the training interval
compute a similarity matrix $S$
whose entry $S_{uv}$ is the score (probability/likelihood) of having an edge between vertices $u$ and $v$.
Generally, $S$ is symmetric, i.e., $S_{uv} = S_{vu}$. 
The pairs of the vertices that are at the top of the ordered list are
most likely to be connected during the test interval \cite{Martinez:2016:SLP:3022634.3012704}.
To compute $S_{u,v}$, several methods have been proposed in the literature,
including
{\em the number of common neighbors} \cite{newman2001clustering},
{\em negative of shortest path length} \cite{Liben-Nowell:2007:LPS:1241540.1241551}
and its variations \cite{DBLP:conf/bncod/LebedevLRM17},
the {\em Jaccard's coefficient} \cite{Salton:1986:IMI:576628},
the {\em preferential attachment index} \cite{RePEc:eee:phsmap:v:311:y:2002:i:3:p:590-614},
{\em hitting time} \cite{Liben-Nowell:2007:LPS:1241540.1241551},
{\em SimRank} \cite{Jeh:2002:SMS:775047.775126},
{\em Katz index} \cite{Katz1953},
the {\em Adamic/Adar index} \cite{adamic2003friends}
and {\em resource allocation based on common neighbor interactions} \cite{Zhang2014}.
In the literature, there are also many algorithms that exploit a classification algorithm,
with these indices as the features, 
and try to predict whether a pair of unconnected vertices will be connected
during the test interval or not \cite{LU20111150,Hasan06linkprediction,Martinez:2016:SLP:3022634.3012704}.

\begin{table*}
\caption{\label{table:dataset2}Specifications of the temporal real-world datasets used in
our experiments for link prediction.}
\begin{center}
\begin{tabular}{ l | p{7.5cm} l p{2cm} l }
\hline
Dataset & Link & \#vertices & \#temporal edges & Time span \\
\hline
sx-stackoverflow   & \url{https://snap.stanford.edu/data/sx-stackoverflow.html} & 2,601,977&63,497,050 & 2774 days \\
sx-mathoverflow    & \url{http://snap.stanford.edu/data/sx-mathoverflow.html} & 24,818  & 506,550     & 2350 days \\ 
sx-superuser       & \url{https://snap.stanford.edu/data/sx-superuser.html} & 194,085 & 1,443,339   & 2773 days \\
sx-askubuntu       & \url{http://snap.stanford.edu/data/sx-askubuntu.html} & 159,316   & 964,437 & 2613 days \\
wiki-talk-temporal & \url{https://snap.stanford.edu/data/wiki-talk-temporal.html}&1,140,149 & 7,833,140& 2320 days \\
CollegeMsg         & \url{http://snap.stanford.edu/data/CollegeMsg.html} & 1,899     & 20,296      & 193 days  \\
\hline
\end{tabular}
\end{center}
\end{table*}

In this section, we propose a new method,
called \textsf{LIDIN}\footnote{\textsf{LIDIN} is an abbreviation
for \textbf{LI}nk prediction based on \textbf{DI}stance and the \textbf{N}mber of shortest paths.},
for sorting the list of 
pairs of unconnected vertices, which is a combination of {\em shortest path length}
and {\em discriminative distance}.
For two pairs of unconnected vertices $\{u_1,v_1\}$ and $\{u_2,v_2\}$,
using \textsf{LIDIN} we say 
vertices $u_2$ and $v_2$ are more likely to form a link during the test interval
than vertices $u_1$ and $v_1$ if:
\begin{itemize}
\item
$d(u_1,v_1)>d(u_2,v_2)$, or
\item
$d(u_1,v_1)=d(u_2,v_2)$ and $dd(u_1,v_1)>dd(u_2,v_2)$.
\end{itemize}
The rationale behind \textsf{LIDIN} is that 
when comparing a pair of vertices $u_1,v_1$ with another pair $u_2,v_2$,
if $d(u_1,v_1)=d(u_2,v_2)$ but $u_2$ and $v_2$ are connected to each other by
more shortest paths than $u_1$ and $v_1$,
then they are more likely to form a link during the test interval.
As a special case, for a fixed $k$, consider the list $L(k)$
consisting of all pairs of unconnected vertices $u$ and $v$
such that $d(u,v)=k$.
A network may have many such pairs.
It is known that compared to the pairs of unconnected vertices that have distance $k+1$,
members of $L(k)$ are more likely
to form a link during the test
interval \cite{Pasta2014,DBLP:conf/uai/ChehreghaniC16}.
However, the question remaining open is
what elements of $L(k)$ are more likely to be connected than the other members?
Using \textsf{LIDIN}, we argue that by increasing the number of shortest paths between the two vertices,
the probability of forming a link increases, too.

In order to empirically evaluate this argument, 
we perform tests over several {\em temporal} real-world networks, including
sx-stackoverflow \cite{DBLP:conf/wsdm/ParanjapeBL17},
sx-mathoverflow \cite{DBLP:conf/wsdm/ParanjapeBL17},
sx-superuser \cite{DBLP:conf/wsdm/ParanjapeBL17},
sx-askubuntu \cite{DBLP:conf/wsdm/ParanjapeBL17},
wiki-talk-temporal \cite{DBLP:conf/wsdm/ParanjapeBL17,leskovec2010} and
CollegeMsg \cite{DBLP:journals/jasis/PanzarasaOC09}.
Table~\ref{table:dataset2} summarizes
the specifications of the used temporal real-world datasets.
We consider all these networks as simple and undirected graphs,
where multi-edges and self-loops are ignored.
Since the networks 
sx-stackoverflow, sx-superuser, sx-askubuntu and wiki-talk-temporal
are too large to load their unconnected pairs of vertices in the memory,
after sorting their edges based on timestamp,
we only consider the subgraphs generated by their first 300,000 edges.

\begin{figure*}
\centering
\subfigure[sx-stackoverflow]
{
\includegraphics[scale=0.41]{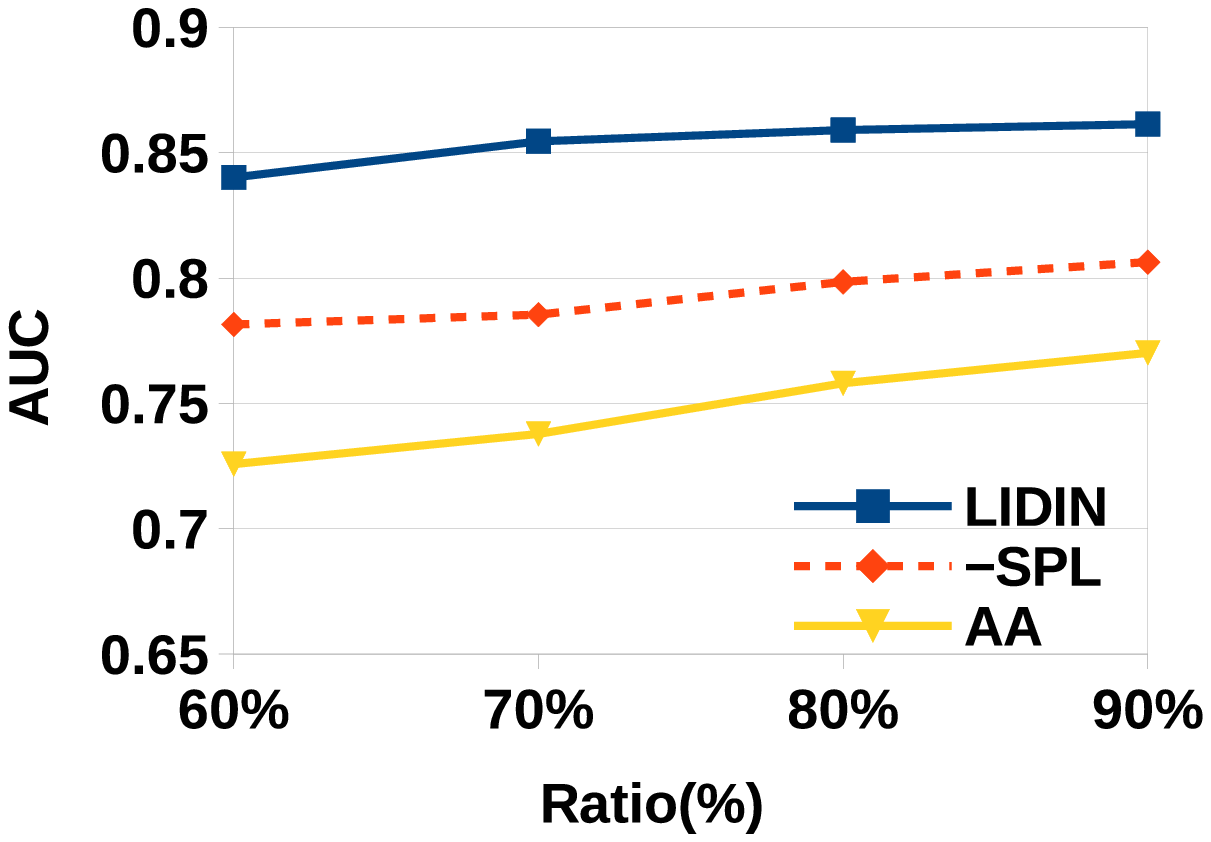}
\label{fig:sx-stackoverflow_auc}
}
\subfigure[sx-mathoverflow]
{
\includegraphics[scale=0.41]{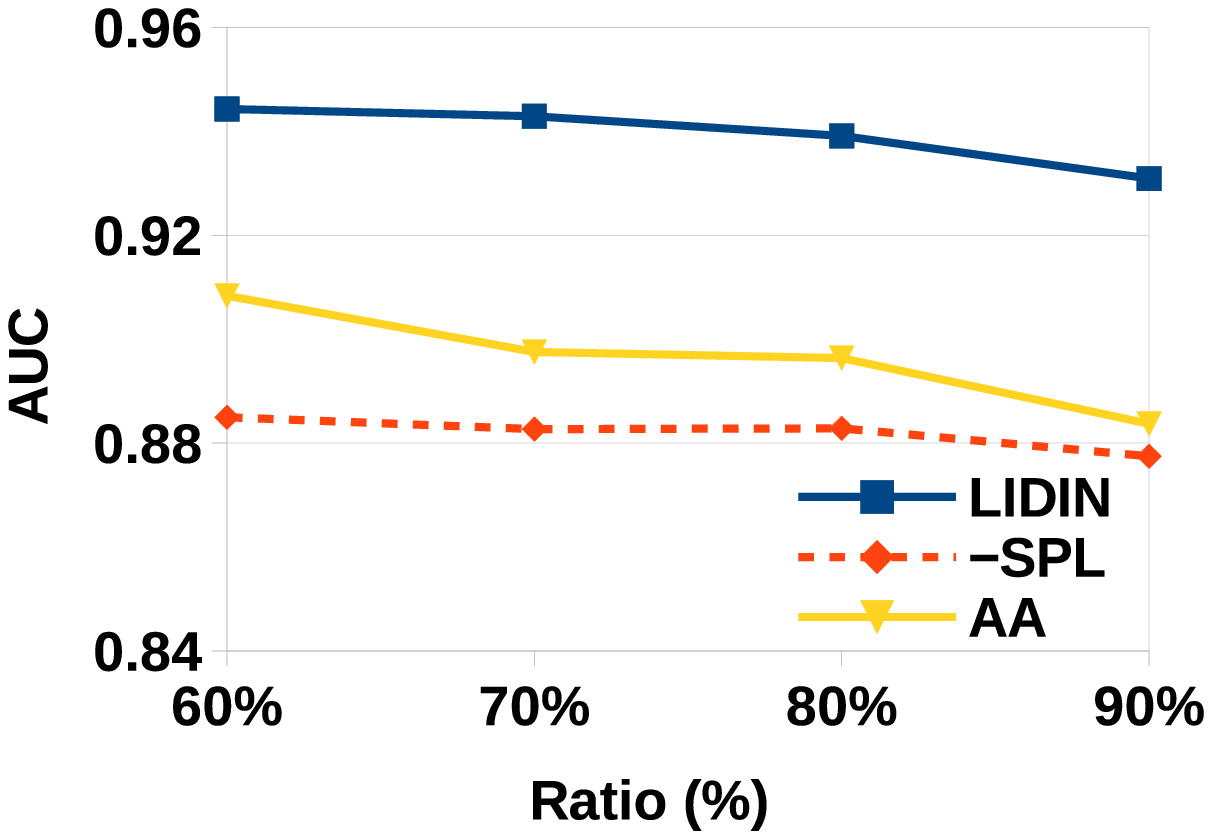}
\label{fig:sx-mathoverflow_auc}
}
\subfigure[sx-superuser]
{
\includegraphics[scale=0.41]{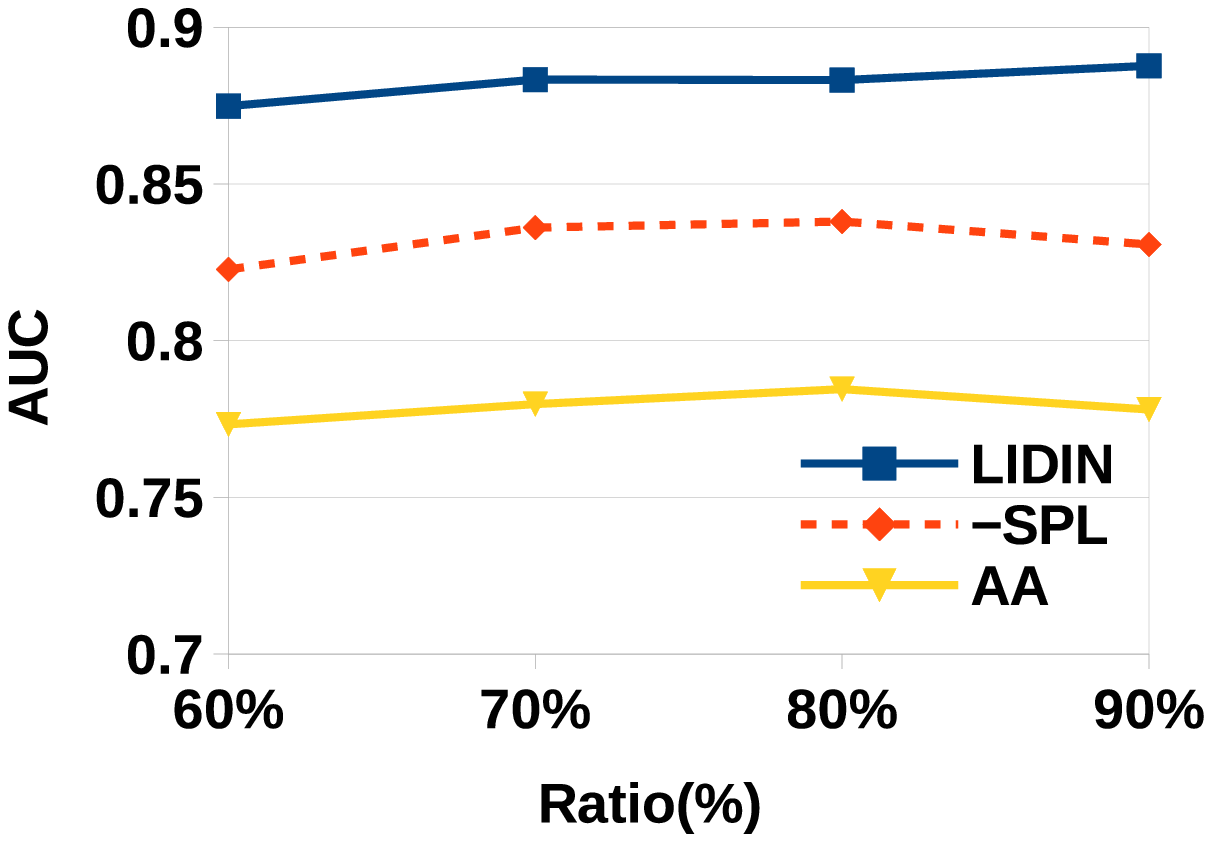}
\label{fig:sx-superuser_auc}
}
\subfigure[sx-askubuntu]
{
\includegraphics[scale=0.41]{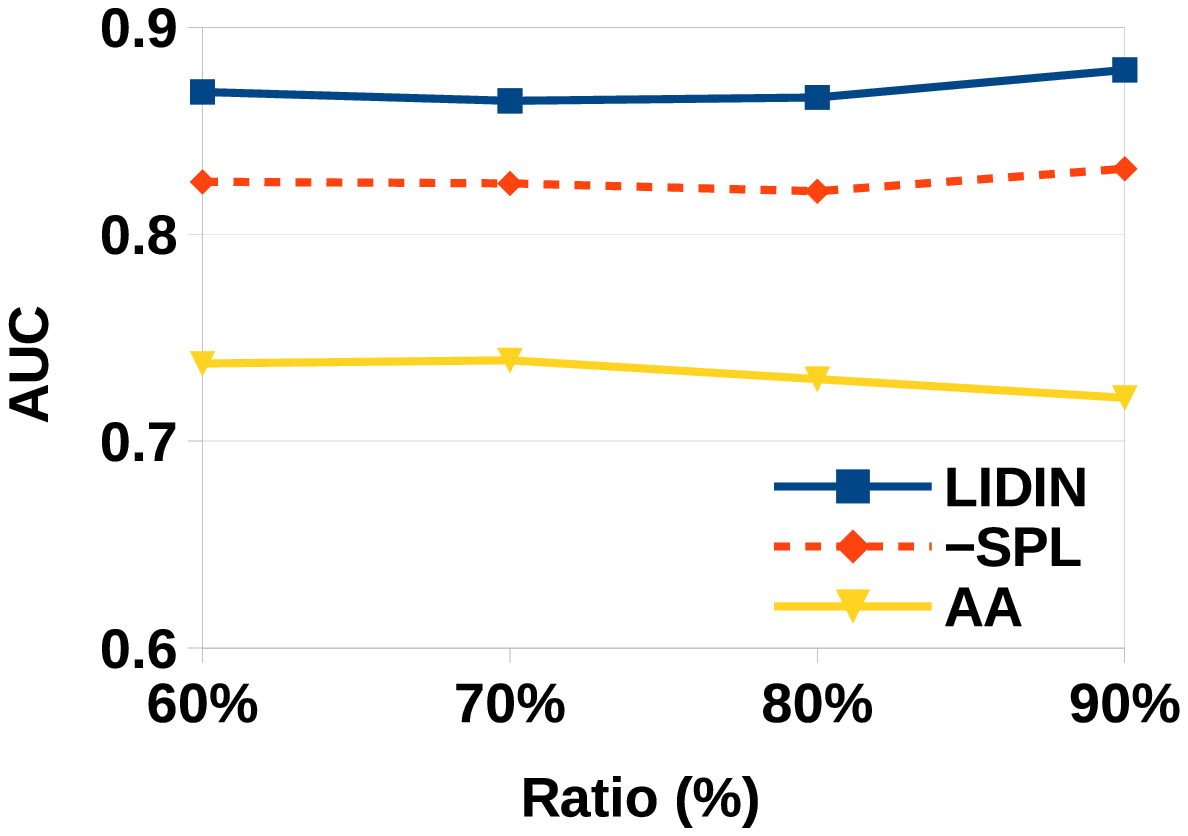}
\label{fig:sx-askubuntu_auc}
}
\subfigure[wiki-talk-temporal]
{
\includegraphics[scale=0.41]{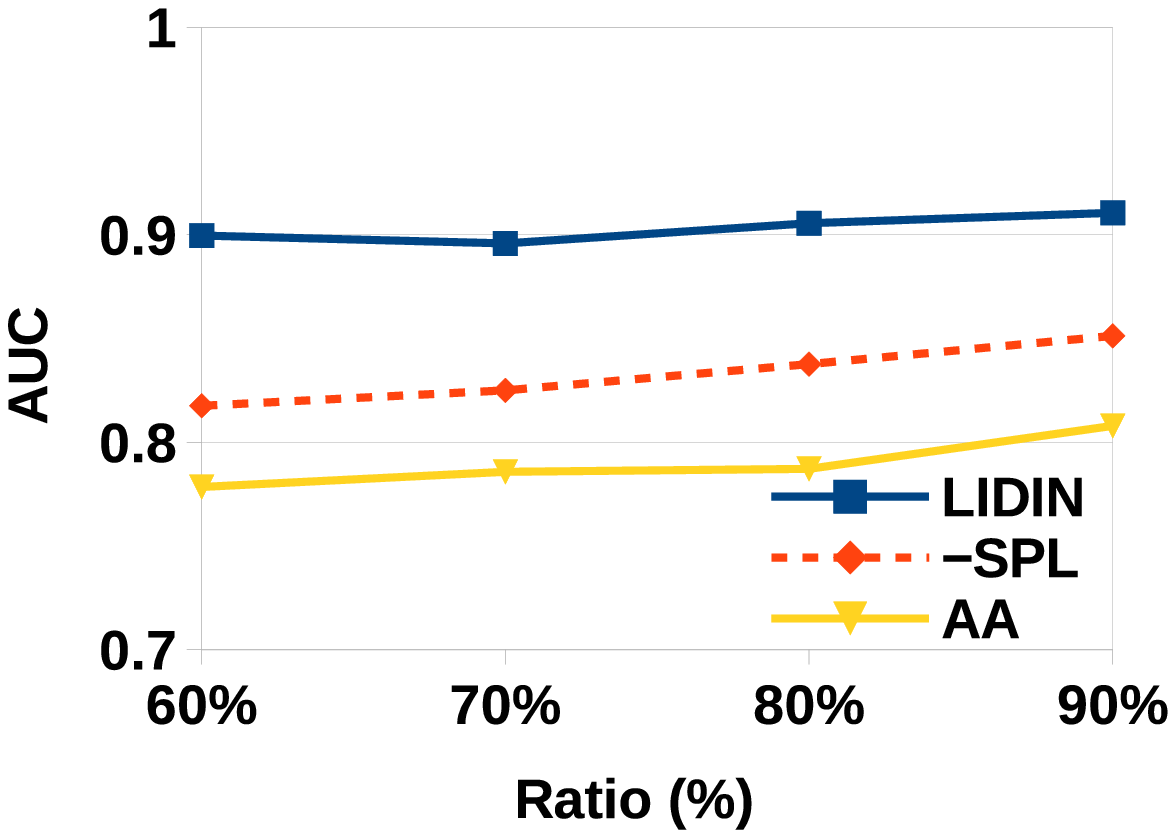}
\label{fig:wiki-talk-temporal_auc}
}
\subfigure[CollegeMsg]
{
\includegraphics[scale=0.41]{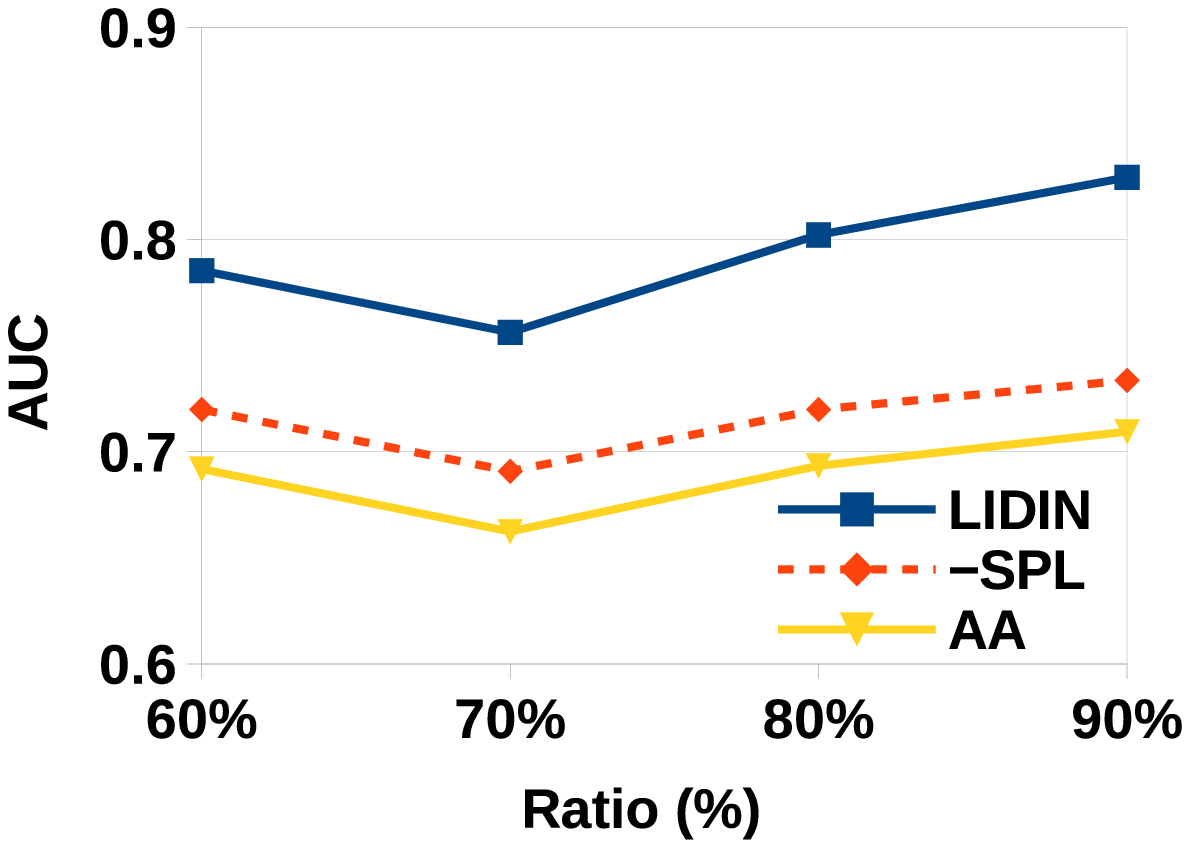}
\label{fig:CollegeMsg_auc}
}
\caption
{
$AUC$ of different link prediction algorithms.
{\em Ratio} shows the percentage of the edges that form the training interval.
\label{figure:auc}
}
\end{figure*}

\begin{figure*}
\centering
\subfigure[sx-stackoverflow]
{
\includegraphics[scale=0.41]{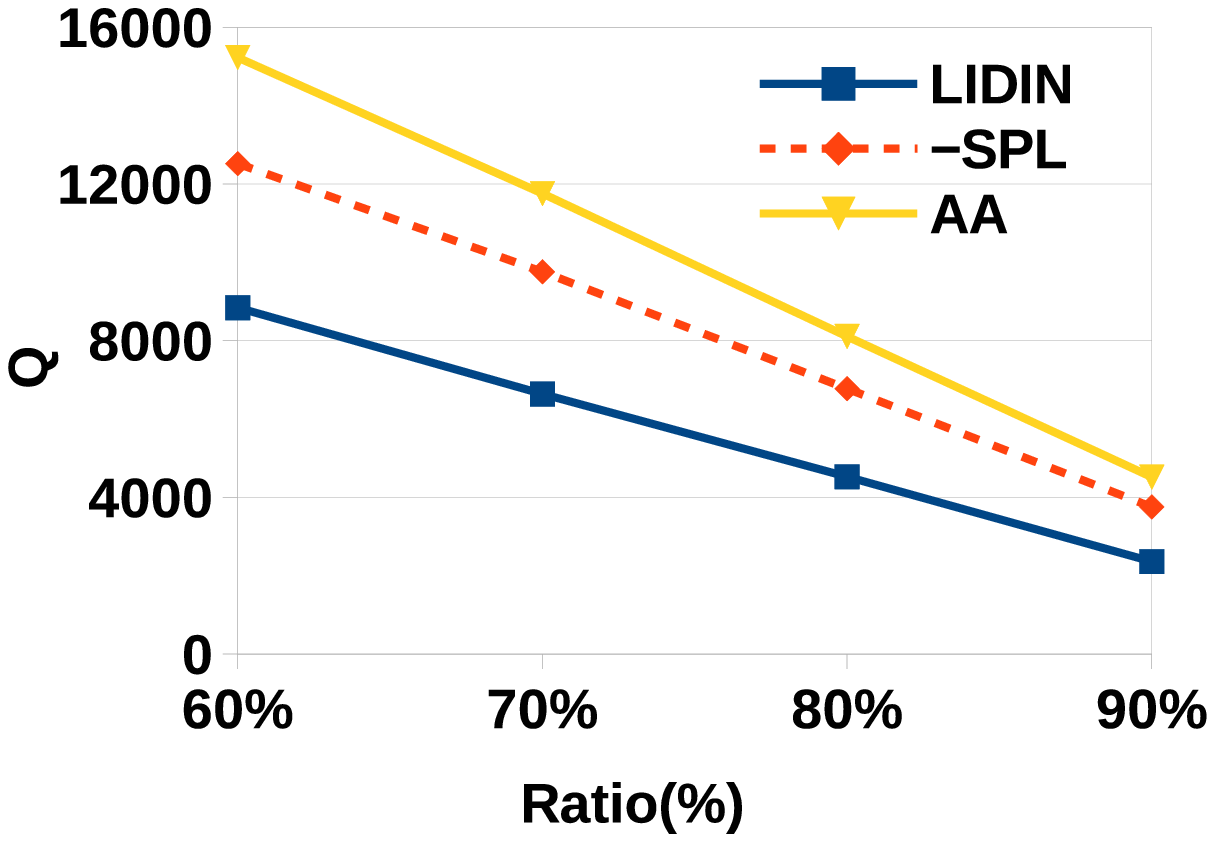}
\label{fig:sx-stackoverflow_Q}
}
\subfigure[sx-mathoverflow]
{
\includegraphics[scale=0.41]{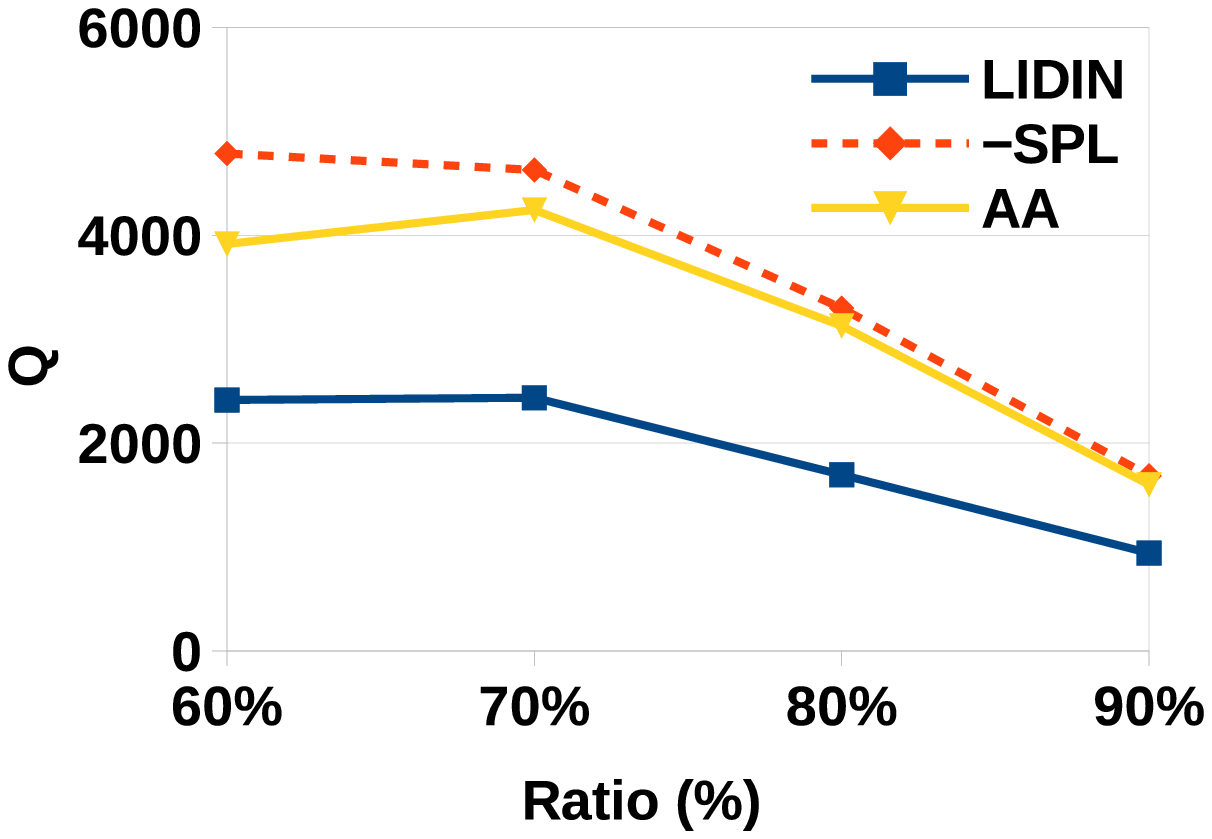}
\label{fig:sx-mathoverflow_Q}
}
\subfigure[sx-superuser]
{
\includegraphics[scale=0.41]{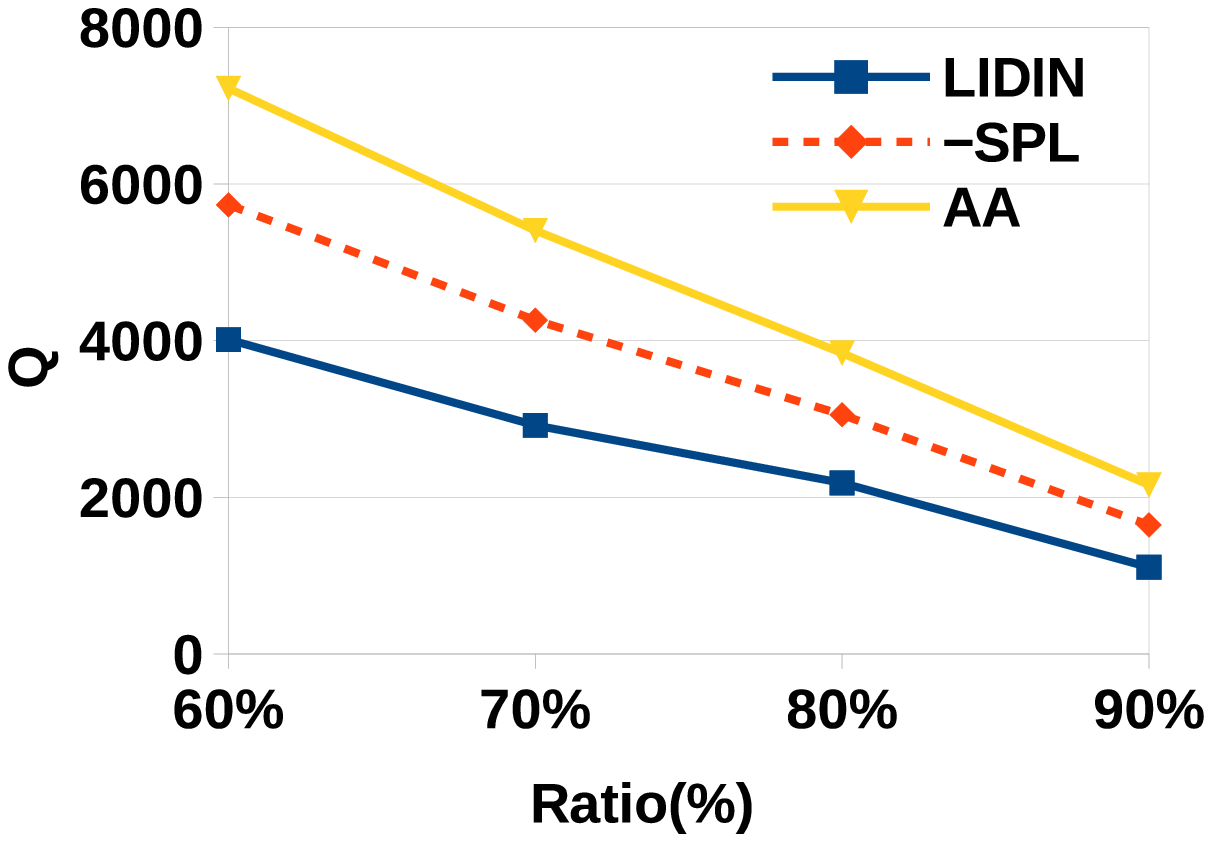}
\label{fig:sx-superuser_Q}
}
\subfigure[sx-askubuntu]
{
\includegraphics[scale=0.41]{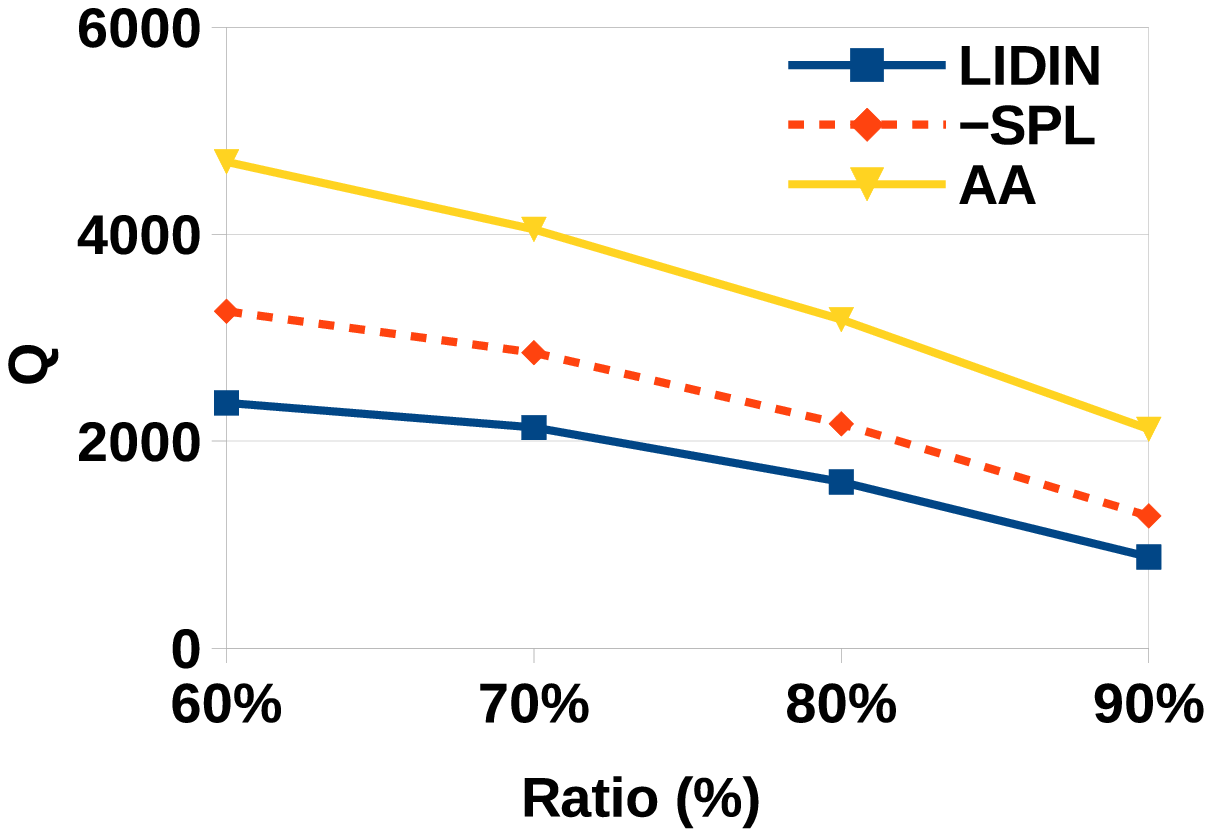}
\label{fig:sx-askubuntu_Q}
}
\subfigure[wiki-talk-temporal]
{
\includegraphics[scale=0.41]{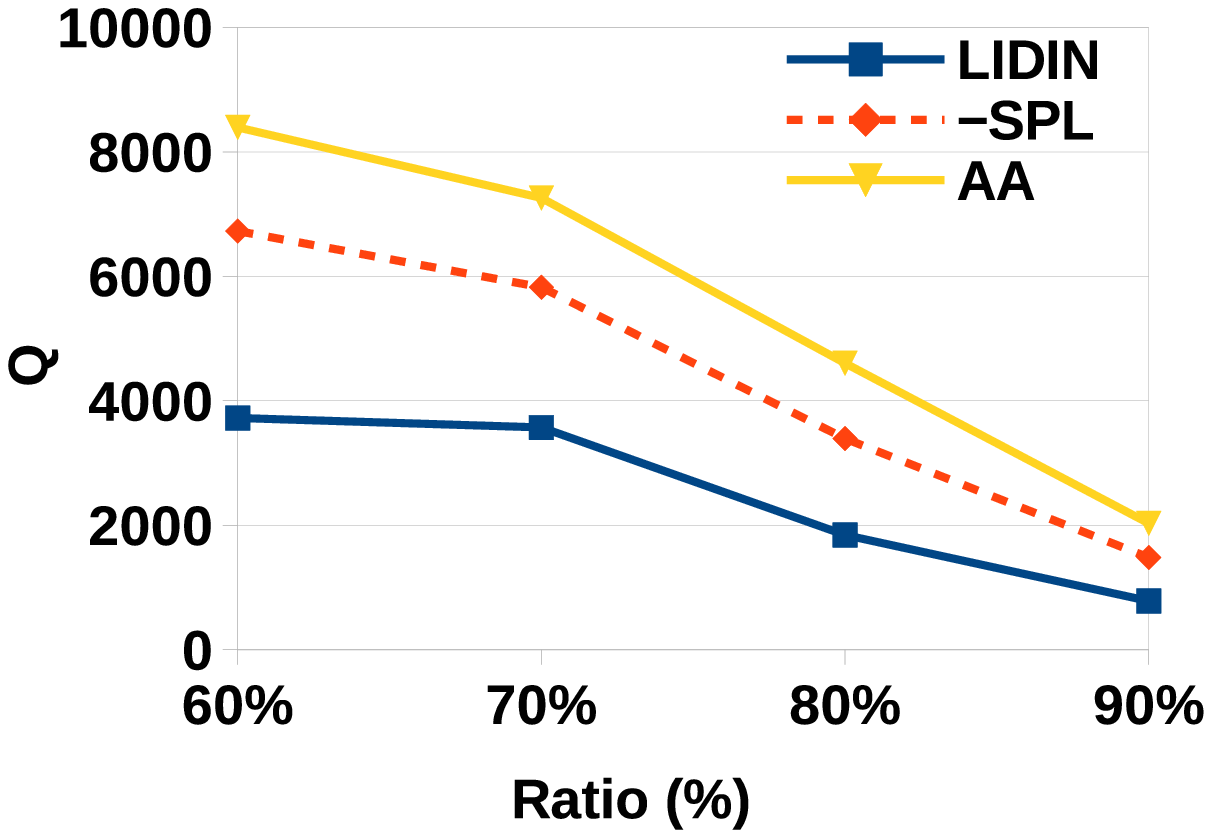}
\label{fig:wiki-talk-temporal_Q}
}
\subfigure[CollegeMsg]
{
\includegraphics[scale=0.41]{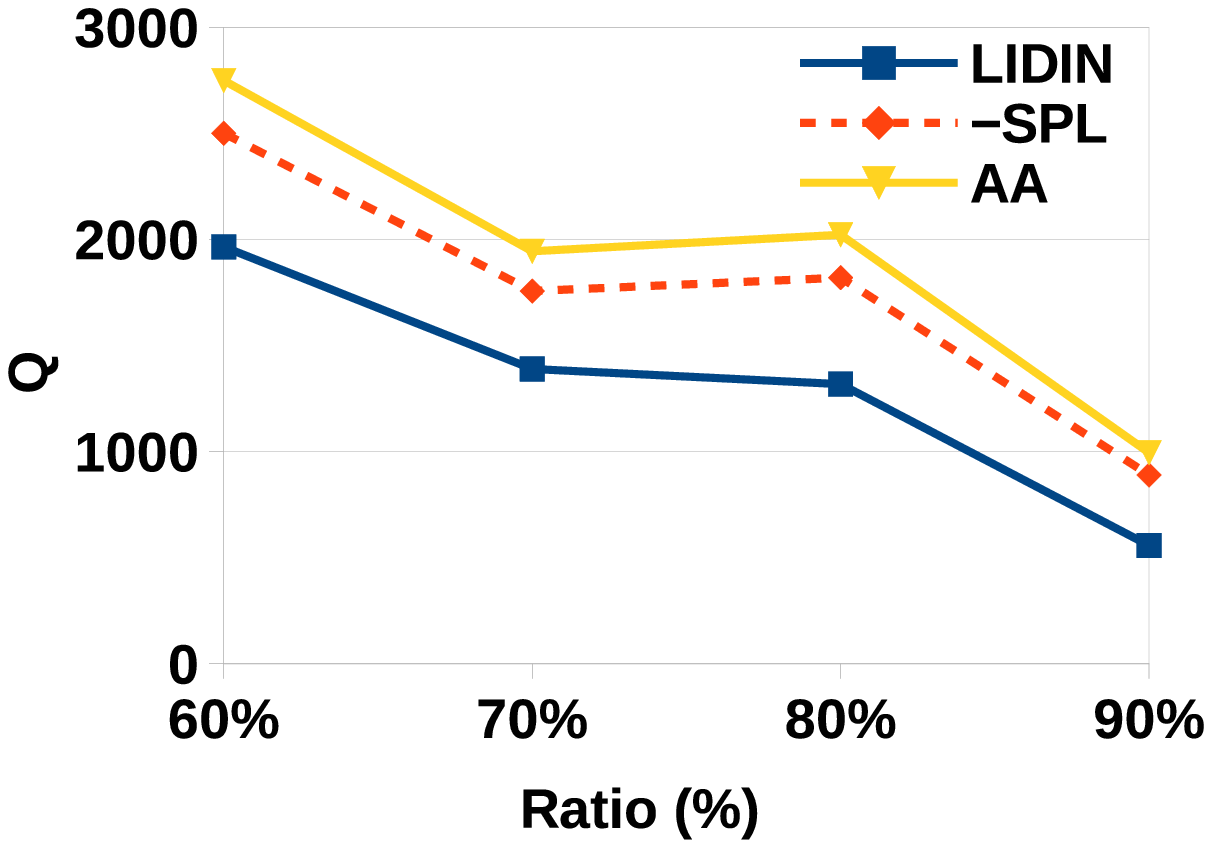}
\label{fig:CollegeMsg_Q}
}
\caption
{
The value of $Q$ for different link prediction algorithms.
{\em Ratio} shows the percentage of the edges that form the training interval.
\label{figure:q}
}
\end{figure*}

We compare \textsf{LIDIN} with
{\em negative of shortest path length} \cite{Liben-Nowell:2007:LPS:1241540.1241551}
and the {\em Adamic/Adar index} \cite{adamic2003friends},
denoted respectively by \textsf{-SPL} and \textsf{AA}.
We choose \textsf{-SPL} because \textsf{LIDIN} is inherently an improvement of \textsf{-SPL},
furthermore, the experiments reported in \cite{Hasan06linkprediction}
show that this index outperforms the other topological (global) indices studied in that paper.
We choose \textsf{AA} because the experiments reported in \cite{Liben-Nowell:2007:LPS:1241540.1241551}
show that among $11$ studied indices, 
the Adamic/Adar index has the best relative performance ratio
versus random predictions, 
the best relative performance ratio versus negative of shortest path length predictor and 
the best relative performance ratio versus common neighbors predictor\footnote{We can use these
indices either as the features of a classification algorithm
(like e.g., \cite{Hasan06linkprediction}),
or as the criteria of sorting the list of unconnected pairs of vertices
(like e.g., \cite{Liben-Nowell:2007:LPS:1241540.1241551}).
Here, 
since we want to omit the effect of the classification algorithm
and study only the effect
of our new notion, we follow the second option.}.

For the graph formed during training interval,
we sort (increasingly when \textsf{LIDIN} is used and
decreasingly when \textsf{-SPL} and \textsf{AA} are used)
the list $L$
of all pairs of unconnected vertices,
based on each of the indices\footnote{
When any of these indices is used,
there might exist two or more pairs
that are not sorted by the index.
In this case, these pairs are sorted according to the identifiers of the end-points of the edges.}.
Then, during the test interval, for each edge that connects a pair in $L$, we examine its rank in $L$.
In order to evaluate the accuracy of a link prediction method,
we use two measures {\em area under the ROC curve} ($AUC$) and {\em ranking error} ($Q$).
In $AUC$,
we measure the probability that a randomly chosen pair of vertices that find a link
during the test interval have a higher score than
a randomly chosen pair of vertices that do not find a link
during the test interval.
Formally,
$AUC$ of a method $ind$ is defined as \cite{aucnature}: 
\[AUC(ind) = \frac{n_g+0.5n_e}{n_t},\]
where $n_t$ is the number of times that we randomly choose two pairs of
vertices; one from those that form a link during the test  interval and the other from those that do not;
$n_g$ is the number of times that the one that forms a link gets a higher score than the other,
and $n_e$ is the number of times that the scores of the two chosen pairs are equal. 
A higher value of $AUC$ implies a better link prediction method.
In our experiments,
we set $n_t$ to the number of edges in the test interval divided by $10$.
However, similar results can be seen for other values of $n_t$.

We define the {\em ranking error} of a method $ind$ as:
\[Q(ind) = \sum_{\{u,v\} \in TE} \frac{rank(\{u,v\},L_{ind})}{|TE|},\]
where
$L_{ind}$ is the list $L$ sorted according to $ind$,
$TE$ contains those edges of the test interval that connect a pair in $L$,
and $rank(\{u,v\},L_{ind})$ returns the rank of $\{u,v\}$ in $L_{ind}$.
For two given indices $ind1$ and $ind2$,  
$Q(ind1) < Q(ind2)$ means that compared to $ind2$,  
$ind1$ gives more priority (i.e., a better rank) to the pairs that 
form a link during the test interval, hence,
$ind1$ is a better method than $ind2$.

We sort the edges of each network according to their timestamps 
and form the training and test intervals based on the timestamps,
i.e., for some given value $\tau$,
training interval contains those edges that have a timestamp at most $\tau$
and test interval contains those edges that have a timestamp larger than $\tau$.
A factor that may affect the empirical behavior of the indices is the value of $\tau$.
Therefore and to examine this,
we consider 4 different settings for each network,
and choose the values of $\tau$ in such a way that training interval includes
$60\%$, $70\%$, $80\%$ and $90\%$ of the edges.
In each case, the rest of the edges which are between a pair of vertices unconnected in the training interval,
belong to the test interval. 

Figure~\ref{figure:auc} compares $AUC$ of different methods.
Over all the datasets and in all the settings,
\textsf{LIDIN} has the highest $AUC$, therefore,
it has the best performance.
Figure~\ref{figure:q} reports the $Q$ of the studied methods over different datasets.
As can be seen in the figure, in all the cases,
\textsf{LIDIN} has the lowest $Q$ and hence, the best performance.
These tests empirically verify our above mentioned argument
that among all the pairs of unconnected vertices in $L(k)$,
those that have a smaller {\em discriminative distance} (and hence, are closer!),
are more likely to form a link.
While in most cases \textsf{-SPL} outperforms \textsf{AA},
over {\em sx-mathoverflow} and for all values of {\em ratio},
\textsf{AA} has a higher $AUC$ and a lower $Q$ than \textsf{-SPL}. 
The superior performance of our proposed link prediction method suggests that
the {\em inverse of discriminative distance} might be useful
in determining similarity between vertices of a network.
This means, for example, more than using the fixed criteria for determining the similarity of objects
in a Social Internet of Vehicle (SIoV) \cite{8070948}
or friendship of User Equipments \cite{8119808},
someone may also use the inverse of discriminative distance.

\section{Conclusion and future work}
\label{sec:conclusion}

In this paper,
we proposed a new distance measure between vertices of a graph,
which is proportional to the length of shortest paths and
inversely proportional to the number of shortest paths.
We presented exact and randomized algorithms
for computation of the proposed discriminative indices and analyzed them.
Then, by performing extensive experiments over several real-world networks,
we first showed that compared to the traditional indices,
discriminative indices have usually much more discriminability.
We then showed that our randomized algorithms can very precisely estimate average discriminative path length and
average discriminative eccentricity,
using only a few samples.
In the end, we presented a novel link
prediction method, that uses discriminative distance to decide which
vertices are more likely to form a link in future, and showed its superior
performance compared to the well-known existing measures.

The current work can be extended in several directions.
An interesting direction is to investigate distribution of discriminative closeness and
discriminative vertex eccentricity in large networks.
In particular, it is useful to see whether there exist correlations 
among discriminative indices on the one hand
and other centrality indices such as betweenness and degree on the other hand.
The other direction for future work is to develop efficient randomized algorithms
for estimating discriminative closeness and discriminative eccentricity 
of one vertex or a set of vertices and discriminative diameter of the graph.
For example, it is interesting to develop algorithms similar
to \cite{DBLP:conf/alenex/BergaminiBCMM16} that estimate
$k$ highest discriminative closeness scores in the graph.
The other extension of the current work is
the empirical evaluation of the generalizations of the discriminative indices
presented in Section~\ref{sec:discriminativecloseness}. 
Finally, another extension of the current work is to study discriminative indices
of different {\em network models} \cite{HAGHIRCHEHREGHANI20171}.

\section*{Acknowledgement}
This work has been supported by the ANR project IDOLE.

\bibliographystyle{ACM-Reference-Format}
\bibliography{allpapers} 

\end{document}